\newtheorem{theorem}{Theorem}
\newtheorem{corollary}[theorem]{Corollary}
\newtheorem{definition}[theorem]{Definition}
\newtheorem{lemma}[theorem]{Lemma}
\newtheorem{proposition}[theorem]{Proposition}
{\theorembodyfont{\upshape}\newtheorem{remark}[theorem]{Remark}}
\newenvironment{proof}[1][Proof]{\noindent\textbf{#1.} }{\ \hfill \rule{0.5em}{0.5em}}
\newcommand{\ui}{\underline{i}}
\newcommand{\cG}{\mathcal{G}}
\begin{document}

\title{Semicircle law for a matrix ensemble with dependent entries}
\author{Winfried Hochst\"{a}ttler, Werner Kirsch \\
%EndAName
Fakult\"{a}t f\"{u}r Mathematik und Informatik\\
FernUniversit\"{a}t in Hagen, Germany \and Simone Warzel \\
%EndAName
Zentrum Mathematik\\
Technische Universit\"{a}t M\"{u}nchen, Germany}
\date{}

\maketitle

\abstract{We study ensembles of random symmetric matrices whose entries
exhibit certain correlations. Examples are distributions of
Curie-Weiss-type. We provide a criterion on the correlations ensuring the
validity of Wigner's semicircle law for the eigenvalue distribution measure.
In case of Curie-Weiss distributions this criterion applies above the
critical temperature (i.~e. $\beta<1$).  We also investigate the largest eigenvalue of
certain ensembles of Curie-Weiss type and find a transition in its behavior at the
critical temperature.\\}

\section{Introduction}

In this article we consider random matrices $X_{N}$ of the form%
\begin{equation}
X_{N}=\left(
\begin{array}{cccc}
X_{N}(1,1) & X_{N}(1,2) & \ldots & X_{N}(1,N) \\
X_{N}(2,1) & X_{N}(2,2) & \ldots & X_{N}(2,N) \\
\vdots & \vdots &  & \vdots \\
X_{N}(N,1) & X_{N}(N,2) & \ldots & X_{N}(N,N)%
\end{array}%
\right)  \label{eq:RM}
\end{equation}

The entries $X_{N}(i,j)$ are real valued random variables
varying with $N$. We will always assume that the matrix $X_{N}$ is
symmetric, such that\goodbreak\noindent $X_{N}(i,j)=X_{N}(j,i)$ for all $i,j$. Furthermore we
suppose that all moments of the $X_{N}(i,j)$ exist and that $\mathbb{E(}%
X_{N}(i,j))=0$ and $\mathbb{E(}X_{N}(i,j)^{2})=1$.

It is convenient to work with the \emph{normalized} version $A_{N}$ of $%
X_{N} $, namely with
\begin{equation}
A_{N}=\frac{1}{\sqrt{N}}X_{N}  \label{RMnorm}
\end{equation}

As $A_{N}$ is symmetric it has exactly $N$ real eigenvalues (counting
multiplicity). We denote them by%
\begin{equation*}
\lambda _{1}\leq \lambda _{2}\leq \ldots \leq \lambda _{N}
\end{equation*}
and define the (empirical) eigenvalue distribution measure by%
\begin{equation*}
\sigma _{N}=\frac{1}{N}\sum_{j=1}^{N}\delta _{\lambda _{j}}
\end{equation*}
and its expected value $\overline{\sigma }_{N},$ the density of states
measure by%
\begin{equation*}
\overline{\sigma }_{N}~=~\mathbb{E\,}\left( \frac{1}{N}\sum_{j=1}^{N}\delta
_{\lambda _{j}}\right) ~~.
\end{equation*}

If the random variables $X_{N}(i,j)$ are independent and identically
distributed (i.i.d.) (except for the symmetry condition $X_N(i,j)=X_N(j,i)$)
then it is well known that the measures $\overline{\sigma }_{N}$ and $\sigma _{N}$ converge weakly to the
semicircle distribution $\sigma _{sc}$ (almost surely in the case of $\sigma _{N}$). The semicircle distribution is
concentrated on the interval $[-2,2]$ and has a density given by $\sigma
_{sc}(x)=\frac{1}{2\pi }\sqrt{4-x^{2}}$ for $x\in \lbrack -2,2]$. This
important result is due to Eugen Wigner \cite{Wigner1} and was proved by
Arnold \cite{Arnold} in greater generality, see also for example \cite%
{Pastur}, \cite{Pastur Sherbina} or \cite{AGZ}.

\bigskip

Recently, there was a number of papers considering random matrices with some
kind of dependence structure among their entries, see for example \cite%
{Chatterjee}, \cite{Goetze Tikhomirov 1}, \cite{Goetze Naumov Tikhomirov} and \cite{Schenker Schulz-Baldes}.
In particular the papers \cite{Bryc et al}, \cite{FriesenLoewe1} and \cite%
{FriesenLoewe2} consider symmetric random matrices whose entries $X_{N}(i,j)$
and $X_{N}(k,\ell )$ are independent if they belong to different diagonals,
i.e. if $\left\vert i-j\right\vert \neq \left\vert k-\ell \right\vert $, but
may be dependent within the diagonals. It was in particular the work \cite%
{FriesenLoewe2} which motivated the current paper. Among other models
Friesen and L\"{o}we \cite{FriesenLoewe2} consider matrices with independent
diagonals and (independent copies of) Curie-Weiss distributed random
variables on the diagonals. (For a definition of the Curie-Weiss model see
below).

The main example for the results in our paper is a symmetric random matrix
whose entries $X_{N}(i,j)$ are Curie-Weiss distributed for all $i,j$ (with $%
i\le j$). The models considered in this paper also include the Curie-Weiss model on diagonals investigated by
Friesen and L\"{o}we. For the reader's convenience we define our Curie-Weiss
ensemble here, but we'll work with abstract assumptions in the following two
chapters.

\bigskip In statistical physics the \emph{Curie-Weiss} model serves as the
easiest nontrivial model of magnetism. There are $M$ sites with random
variables $X_{i}$ attached to the sites $i$ taking values $+1$ ("spin up") or $%
-1 $ ("spin down"). Each spin $X_{i}$ interacts with all the other spins
prefering to be aligned with the average spin $\frac{1}{M}\sum_{j\neq
i}X_{j}.$ More precisely:

\begin{definition}
\bigskip Random variables $\{X_{j}\}_{j=1,\ldots ,M}$ with values in $%
\{-1,+1\}$ are distributed according to a Curie-Weiss law $\mathbb{P}_{\beta
,M}$ with parameters $\beta \geq 0$ (called the inverse temperature) and $%
M\in
%TCIMACRO{\U{2115} }%
%BeginExpansion
\mathbb{N}
%EndExpansion
$ (called the number of spins) if
\begin{equation}
\mathbb{P}_{\beta ,M}(X_{1}=\xi _{1},X_{2}=\xi _{2},\ldots ,X_{M\,}=\xi
_{M})~=~ Z_{\beta ,M}^{-1}\,\,\frac{1}{2^{M}}\,e^{\frac{\beta }{2M}%
(\sum \xi _{j})^{2}}
\end{equation}%
where $\xi _{i}\in \{-1,+1\}$ and $Z_{\beta ,M}$ is a normalization constant.
\end{definition}

For $\beta <1$ Curie-Weiss distributed random variables are only weakly
correlated, while for $\beta >1$ they are strongly correlated. This is
expressed for example by the fact that a law of large numbers holds for $%
\beta <1,$ but is wrong for $\beta >1$. This sudden change of behavior is
called a "phase transition" in physics. In theoretical physics jargon the quantity $T=\frac{1}{\beta}$ is called the \emph{temperature} and $T=1$ is called the
\emph{critical temperature}. More information about the Curie-Weiss model and its physical meaning can be found in
\cite{Thompson} and \cite{Ellis}.

Our Curie-Weiss matrix model, which we dub the \emph{full Curie-Weiss
ensemble}, is defined through $M=N^{2}$ random variables $\left\{
Y_{N}(i,j)\right\} _{1\leq i,j\leq N}$ which are $\mathbb{P}_{\beta ,M}$%
-distributed. To form a \emph{symmetric }matrix we set $%
X_{N}(i,j)=Y_{N}(i,j) $ for $i\leq j$ and $X_{N}(i,j)=Y_{N}(j,i)$ for $i>j$
and define

\begin{equation*}
A_{N}=\frac{1}{\sqrt{N}}X_{N}
\end{equation*}

By the \emph{diagonal Curie-Weiss} \emph{ensemble} we mean a symmetric
random matrix with the random variables on the $k^{th}$ diagonal $\left\{
i,i+k\right\} $ being $\mathbb{P}_{\beta ,N}$-distributed ($0\leq k\leq N-1$
and $1\leq i\leq N-k$) and with entries on different diagonals being
independent. This model was considered in \cite{FriesenLoewe2}. For $\beta
<1 $ we will prove the semicircle law for these two ensembles.

In the following section we formulate our general abstract assumptions and
state the first theorem of this paper which establishes the semicircle law
for our models. The proof follows in Section \ref{sec:proof}.

In Section \ref{sec:CW} we discuss our main example, the full Curie-Weiss
model, in fact we will study various random matrix ensembles associated to
Curie-Weiss-like models. In this section we also discuss exchangeable random
variables and their connection with the Curie-Weiss model.

In Section \ref{sec:largeig} we investigate the largest eigenvalue (and thus the matrix norm) of Curie-Weiss-type
matrix ensembles both below and above the critical value $\beta=1$.
\medskip

\medskip \textbf{Acknowlegment }It is a pleasure to thank Matthias L\"{o}we,
M\"{u}nster, and Wolfgang Spitzer, Hagen, for valuable discussion. Two of us
(WK and SW) would like to thank the Institute for Advanced Study in
Princeton, USA, where part of this work was done, for support and
hospitality.

\section{The semicircle law\label{sec:result}}

\begin{definition}
\label{def:SchemeMatrix}
Suppose $\left\{ I_{N}\right\} _{N\in \mathbb{N}}$
is a sequence of finite index sets $I_{N}$. A family $\left\{ X_{N}(\rho
)\right\} _{\rho \in I_{N},N}$ of random variables indexed by $N\in\mathbb{N}$ and (for given $N$)
by the set $I_{N}$ is called an $\left\{ I_{N}\right\} $\emph{%
-scheme of random variables}. If the sequence $\left\{ I_{N}\right\} $ is
clear from the context we simply speak of a scheme.

To define an ensemble of \emph{symmetric} random matrices we start with a
`quadra\-tic' scheme of random variables $\left\{ Y_{N}(i,j)\right\}
_{(i,j)\in I_{N}}$ with $I_{N}=\left\{ (i,j)\,|\,1\leq i,j\leq N\right\} $
and define the matrix entries $X_{N}(i,j)$ by $X_{N}(i,j)=Y_{N}(i,j)$ for $%
i\leq j$ and $X_{N}(i,j)=Y_{N}(j,i)$ for $i>j$.
\end{definition}

\begin{remark}
To define the symmetric matrix $X_{N}$ it would be enough to start with a
`triangular' scheme of random variables, i.e.\ one with\goodbreak\noindent $I_{N}=\{
(i,j)\,|\,1\leq i\leq j\leq N\} $, thus with $M=\frac{1}{2}N(N+1)$ random
variables instead of $M=N^{2}$ variables. To reduce notational inconvenience
we decided to use the quadratic schemes. In a slight abuse of language we
will no longer distiguish in notation between the random variables $%
Y_{N}(i,j)$ and their symmetrized version $X_{N}(i,j)$. We will always
assume that the random matrices we are dealing with are symmetric.
\end{remark}

In this paper we consider schemes $\left\{ X_{N}(i,j)\right\} _{(i,j)\in
I_{N}}$ of random variables with $N=1,2,\ldots $and $I_{N}=\left\{
(i,j)\,|\,1\leq i,j\leq N\right\} $ with the following property:

\begin{definition}
\label{def:apcor} A scheme $\left\{ X_{N}(i,j)\right\} _{(i,j)\in I_{N}}$
is called \emph{%
approximately uncorrelated}, if
\begin{eqnarray}
\,\left\vert \mathbb{E}\left( \prod\limits_{\nu =1}^{\ell}X_{N}(i_{\nu },j_{\nu
})\,\,\prod\limits_{\rho =1}^{m}X_{N}(u_{\rho },v_{\rho })\right)
\right\vert \, &\leq &\ \frac{C_{\ell,m}}{N^{\ell/2}}  \label{eq:au1} \\
\left\vert \,\mathbb{E}\left( \prod\limits_{\nu =1}^{\ell}X_{N}(i_{\nu },j_{\nu
})^{2}\right) \,-\ 1\right\vert \ &\rightarrow &\ 0  \label{eq:au2}
\end{eqnarray}
for all sequences $(i_{1},j_{1}),(i_{2},j_{2}),\ldots ,(i_{\ell},j_{\ell})$ which
are pairwise disjoint and disjoint to the sequence $(u_{1},v_{1}),\ldots
,(u_{m},v_{m})$ with $N$-independent constants $C_{\ell,m}$.
\end{definition}
Note that for any approximately correlated scheme the mean asymptotically vanishes, $\left|\mathbb{E}(X_{N}((i,j))) \right| \leq C_{1,0} N^{-1/2} $ by \eqref{eq:au1}, and the variance is asymptotically one, $\mathbb{E}%
(X_{N}(i,j)^{2}) \to 1$ by \eqref{eq:au2}. Moreover, by \eqref{eq:au1}  we also have $\sup_{N,i,j}\mathbb{E}\left(
X_{N}(i,j)^{2k}\right) <\infty $ for all $k$.

The main examples we have in mind are schemes of Curie-Weiss- distributed
random variables (full or diagonal) with inverse temperature $\beta \leq 1$
(for details see Section \ref{sec:CW}).

\begin{theorem}
\label{th:Hres}If $\left\{ X_{N}(i,j)\right\} _{1\leq i\leq j\leq N}$ is an
approximately uncorrelated scheme of random variables then the
eigenvalue distribution measures $\sigma_{N}$  of the corresponding symmetric
matrices $A_{N}$ (as in (\ref{RMnorm})) converge weakly in probability
 to the semicircle
law $\sigma _{sc\,}$, i.e.\ for all bounded continuous functions $f$ on $%
%TCIMACRO{\U{211d} }%
%BeginExpansion
\mathbb{R}
%EndExpansion
$ and all $ \varepsilon > 0 $ we have%
\begin{equation*}
\mathbb{P}\left(\left| \int \,f(x)\,d\sigma _{N}(x) - \int \,f(x)\,d\sigma
_{sc}(x)\right| > \varepsilon \right) ~\rightarrow~ 0 ~.
\end{equation*}
\end{theorem}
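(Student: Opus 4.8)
The plan is to use the method of moments. Since the semicircle law $\sigma_{sc}$ is compactly supported and hence uniquely determined by its moments, it suffices to prove that for every fixed $k$ the random moment $\int x^{k}\,d\sigma_{N}(x)=\frac{1}{N}\operatorname{tr}(A_{N}^{k})$ converges in probability to the $k$-th moment of $\sigma_{sc}$, namely to the Catalan number $C_{k/2}=\frac{1}{k/2+1}\binom{k}{k/2}$ when $k$ is even and to $0$ when $k$ is odd. By Chebyshev's inequality this reduces to two estimates: convergence of the expected moments $\mathbb{E}\big(\frac{1}{N}\operatorname{tr}(A_{N}^{k})\big)$ to the semicircle moments, and decay of the variances $\operatorname{Var}\big(\frac{1}{N}\operatorname{tr}(A_{N}^{k})\big)\to 0$. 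The passage from moment convergence in probability to the weak convergence in probability asserted in the theorem is then standard: polynomials are handled directly, and the uniform bound $\sup_{N,i,j}\mathbb{E}(X_{N}(i,j)^{2k})<\infty$ noted after Definition \ref{def:apcor} yields, via Markov's inequality, the tightness needed to approximate a general bounded continuous $f$ by polynomials on a large interval.

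For the expected moments I would expand
\begin{equation*}
\frac{1}{N}\operatorname{tr}(A_{N}^{k})=\frac{1}{N^{1+k/2}}\sum_{i_{1},\ldots ,i_{k}=1}^{N}X_{N}(i_{1},i_{2})\,X_{N}(i_{2},i_{3})\cdots X_{N}(i_{k},i_{1}),
\end{equation*}
and organize the sum according to the closed walk $i_{1}\to i_{2}\to\cdots\to i_{k}\to i_{1}$ it defines on the complete graph on $\{1,\ldots,N\}$. For a given walk let $r$ be the number of distinct vertices, $t$ the number of distinct edges, and $\ell$ the number of edges traversed exactly once. Factoring the $\ell$ singleton edges into the first product of \eqref{eq:au1} and the remaining factors into the second, the expectation of each term is bounded by $C_{\ell,\,k-\ell}\,N^{-\ell/2}$, while the number of walks of a fixed shape is $O(N^{r})$. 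Since every non-singleton edge is traversed at least twice, $k\geq 2t-\ell$, and connectedness of the walk gives $r\leq t+1$; combining these yields $r-\tfrac{\ell}{2}-\tfrac{k}{2}-1\leq 0$, so each shape contributes $O(1)$. Equality forces the walk to be a tree ($r=t+1$) in which every non-singleton edge is traversed exactly twice. On a tree every edge of a closed walk is crossed an even number of times, so necessarily $\ell=0$, and one is left precisely with the ``double-tree'' walks: even $k$, each of the $t=k/2$ edges traversed twice, $r=k/2+1$ vertices. For these \eqref{eq:au2} gives expectation $\to 1$, and their number is asymptotically $C_{k/2}\,N^{k/2+1}$, so the expected moment converges to $C_{k/2}$ for even $k$ and to $0$ for odd $k$.

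The variance estimate follows the same combinatorics applied to $\mathbb{E}\big[(\frac{1}{N}\operatorname{tr}A_{N}^{k})^{2}\big]$, which is a sum over pairs of closed walks $(W_{1},W_{2})$ of length $k$ each, normalized by $N^{-(2+k)}$. The same power counting, now with $2k$ total traversals, shows that any pair whose union is connected contributes $O(N^{-1})$; the leading $\Theta(1)$ terms come exactly from pairs in which $W_{1}$ and $W_{2}$ are vertex-disjoint double trees, whose joint expectation tends to $1$ by \eqref{eq:au2} applied to all $k$ distinct squared edges. Summing these gives $C_{k/2}^{2}+o(1)$, which cancels $\big(\mathbb{E}\frac{1}{N}\operatorname{tr}A_{N}^{k}\big)^{2}\to C_{k/2}^{2}$, so the variance vanishes. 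The main obstacle is precisely this power counting in the presence of dependence: in the i.i.d.\ case singleton edges contribute nothing because $\mathbb{E}(X)=0$, whereas here one must check that the gain $N^{-\ell/2}$ supplied by \eqref{eq:au1} exactly compensates the extra vertices a non-paired walk can visit, so that only the double-tree configurations (and, for the variance, the disjoint double-tree pairs) survive, and the two correlation conditions \eqref{eq:au1}--\eqref{eq:au2} combine to reproduce the semicircle moments.
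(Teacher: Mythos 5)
Your proposal is correct, and its overall skeleton is the same as the paper's: the moment method, with \eqref{eq:au1} supplying the factor $N^{-\ell/2}$ for the $\ell$ single-traversal edges, $O(N^{r})$ walks per shape, doubled planar trees as the sole surviving configurations (counted by Catalan numbers), and the same two-case analysis of $\mathbb{E}\big[(\mathrm{tr}\,A^{k})^{2}\big]$ (intersecting versus disjoint pairs of walks) to kill the variance and conclude via Chebyshev --- this is precisely Propositions \ref{lem:Lemma6} and \ref{lem:Lemma7}. Where you genuinely differ is the key combinatorial step. The paper obtains $\rho(\underline{i})-\sigma(\underline{i})/2\le k/2+1$ (Corollary \ref{cor:i}) from Proposition \ref{lem:graph}, a standalone graph-theoretic statement proved by induction on the edges of multiplicity at least $2$, with a case distinction according to whether deleting a doubled edge disconnects the graph. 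You get the same inequality in two lines: connectivity gives $r\le t+1$ (with $t$ the number of distinct edges), multiplicity counting gives $2t-\ell\le k$, hence $r-\ell/2\le k/2+1$; equality forces the distinct edges to form a tree with all multiplicities at most $2$, and the even-crossing parity of a closed walk over any tree edge then forces $\ell=0$, i.e.\ a doubled tree. This is a real simplification: it avoids the induction and yields exactly the equality-case characterization needed; similarly, your direct power count on the union graph of an intersecting pair replaces the paper's detour through Corollary \ref{cor:cor9}. What the paper's route buys is a slightly stronger, independently interesting bound (at least $2t+1$ simple \emph{proper} edges, more than your counting gives), but that surplus is never used. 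Two points to make airtight: in the equality case loops must be excluded --- this is automatic, since connectivity is achieved by proper edges alone, so $r=t+1$ already rules out loops; and the tightness needed to pass from moments to general bounded continuous $f$ comes from boundedness of the expected trace moments $\mathbb{E}\int x^{2m}\,d\sigma_{N}$ (which your moment computation provides), not directly from $\sup_{N,i,j}\mathbb{E}(X_{N}(i,j)^{2k})<\infty$; that final step is the standard argument the paper delegates to \cite{AGZ}.
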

In particular, we prove the weak convergence of the density of states measure $ \overline{\sigma}_N $ to the semicircle law $\sigma _{sc\,}$.
In Section \ref{sec:CW} we discuss various examples of approximately
uncorrelated schemes.

\section{Proof of the semicircle law \label{sec:proof}}

The proof is a refinement of the classical moment method (see for example~\cite{AGZ}). We will sketch the proof emphasizing only the new ingredients.
As in \cite{AGZ}, Theorem~\ref{th:Hres} follows from the following two propositions.
\begin{proposition}\label{lem:Lemma6}
For all $ k \in \mathbb{N} $:
\begin{equation}
\frac{1}{N}\mathbb{E}\,\left( \mathrm{tr\,}A^{k}\right) \rightarrow \left\{
\begin{array}{cc}
C_{k/2} & \text{for }k\text{ even} \\
0 & \text{for }k\text{ odd}%
\end{array}%
\right.  \label{eq: moment}
\end{equation}
where $C_{k}=\frac{1}{k+1}\binom{2k}{k}$ denote the Catalan numbers.
\end{proposition}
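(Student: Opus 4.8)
The plan is to compute the limiting moments of the density of states measure via the trace formula

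\begin{equation*}
\frac{1}{N}\mathbb{E}\,(\mathrm{tr\,}A_N^k)
~=~ \frac{1}{N^{1+k/2}}\sum_{i_1,\ldots,i_k=1}^{N}
\mathbb{E}\left(X_N(i_1,i_2)X_N(i_2,i_3)\cdots X_N(i_k,i_1)\right),
\end{equation*}

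and to identify, in the limit $N\to\infty$, exactly which index tuples $(i_1,\ldots,i_k)$ contribute. Each tuple determines a closed walk of length $k$ on the vertex set $\{1,\ldots,N\}$, and the expectation factor depends only on the multigraph traced out by the edges $\{i_\nu,i_{\nu+1}\}$ (cyclically). First I would group the $N^k$ tuples according to the isomorphism type of this walk, in particular by the number $r$ of \emph{distinct} edges appearing and their multiplicities. The count of tuples realizing a given shape with $s$ distinct vertices is of order $N^s$, so the total contribution of a shape scales like $N^{s-1-k/2}$ times the relevant expectation.

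The second step is to invoke the approximate uncorrelatedness hypothesis to bound the expectations. If some edge of the walk is traversed exactly once, then that factor $X_N(i_\nu,j_\nu)$ is disjoint from all the others, and \eqref{eq:au1} (with $\ell=1$) forces the whole expectation to be $O(N^{-1/2})$; more generally, if the walk has any edge of odd or of multiplicity-one type, repeated application of \eqref{eq:au1} gives enough negative powers of $N$ to kill the term. Thus only walks in which every distinct edge is traversed at least twice survive. A standard graph-theoretic counting argument (Euler's formula applied to the connected subgraph spanned by the walk) then shows that for such walks $s \le k/2 + 1$, with equality precisely when the walk traverses each of its $k/2$ distinct edges exactly twice and the edge-graph is a tree. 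For these extremal walks the expectation factor is $\mathbb{E}\big(\prod_{\nu=1}^{k/2} X_N(i_\nu,j_\nu)^2\big)$ over disjoint pairs, which by \eqref{eq:au2} tends to $1$. All non-extremal survivors contribute $s-1-k/2<0$ in the exponent and vanish. Consequently only $k$ even contributes, and the surviving count is the number of such tree-like double-traversal walks, which is classically the Catalan number $C_{k/2}$; this yields \eqref{eq: moment}.

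The main obstacle, and the place where the new hypothesis does real work, is controlling the \emph{total} contribution of the subleading shapes. In the i.i.d.\ case each individual non-extremal term vanishes simply because factors factorize and odd single edges have zero mean; here the entries are dependent, so one cannot argue term-by-term with exact independence. Instead one must check that the estimate \eqref{eq:au1} is strong enough that, after summing over the $O(N^s)$ index tuples of a given shape, the aggregate is still $o(1)$. The delicate point is that a walk may contain several singly-traversed edges simultaneously, and one needs the factor $N^{-\ell/2}$ in \eqref{eq:au1}, applied with $\ell$ equal to the number of such once-used edges, to beat the combinatorial gain $N^{s}$; verifying that the bookkeeping of $s$ against $\ell$ always leaves a net negative power of $N$ (equivalently, that $s - 1 - k/2 + (\text{deficit from } \eqref{eq:au1}) < 0$ off the extremal stratum) is the crux. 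Finally I would note that since the constants $C_{\ell,m}$ and the moment bounds $\sup_{N,i,j}\mathbb{E}(X_N(i,j)^{2k})<\infty$ are $N$-independent, the number of shapes of each type is finite and $N$-independent, so interchanging the finite sum over shapes with the limit is automatic and no uniformity issue arises.
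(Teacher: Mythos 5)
Your overall strategy coincides with the paper's: expand the trace into closed walks, stratify by the number of distinct vertices and of once-traversed edges, use \eqref{eq:au1} to gain $N^{-1/2}$ per once-traversed edge, use \eqref{eq:au2} on the doubled-tree stratum, and count those trees by Catalan numbers. But there is a genuine gap, and it sits exactly where you yourself locate "the crux": you never prove the inequality ensuring that the decay from \eqref{eq:au1} beats the combinatorial gain off the extremal stratum. In the paper's notation, with $\rho(\ui)$ the number of distinct vertices and $\sigma(\ui)$ the number of simple (multiplicity-one) edges of the walk, each stratum contributes $O\bigl(N^{\rho(\ui)-\sigma(\ui)/2-1-k/2}\bigr)$, so what must be shown is
\begin{equation*}
\rho(\ui)-\sigma(\ui)/2 \,\le\, k/2+1\,,
\end{equation*}
with equality only when $\sigma(\ui)=0$ and $\rho(\ui)=k/2+1$. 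This is not a routine bookkeeping step: when $\rho(\ui)>k/2+1$ the number of index tuples exceeds $N^{1+k/2}$, and one needs to know that such walks necessarily contain \emph{enough} simple proper edges --- at least $2t+1$ of them whenever $\rho(\ui)>k/2+t$ --- to compensate. This is precisely Proposition~\ref{lem:graph} and Corollary~\ref{cor:i}, which the authors single out as one of the key ideas of their proof; they establish it by induction on the number of edges of multiplicity greater than one, with separate treatment of loops and of the case where deleting a doubled edge disconnects the graph. Your "standard graph-theoretic counting argument (Euler's formula)" covers only walks in which \emph{every} edge is traversed at least twice; it says nothing about walks mixing simple and repeated edges, which is exactly the case that cannot be dismissed term-by-term in the dependent setting and that your aggregate bound needs.

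Two secondary imprecisions. First, you claim decay also for edges of odd multiplicity; but \eqref{eq:au1} requires the singly-occurring factors to be index-disjoint from all other factors in the product, so an edge of multiplicity $3$, say, yields no negative power of $N$ --- only multiplicity-one edges do. This does not derail the argument (the paper's Lemma~\ref{lem:expa} uses only the multiplicity-one edges), but it matters if one hoped to bypass the combinatorial lemma above by harvesting extra decay that \eqref{eq:au1} does not provide. Second, in the abstract scheme the expectation of a walk does \emph{not} depend only on the isomorphism type of its multigraph (no exchangeability is assumed); only the \emph{bounds} \eqref{eq:au1}--\eqref{eq:au2} are uniform over tuples of a given type, which is what the grouping argument may legitimately use.
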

The
right hand side of (\ref{eq: moment}) gives the moments of the semicircle
distribution $ \sigma_{sc} $.
In fact, this proposition implies the weak convergence of the density of states measures $ \overline{\sigma}_N $ to $ \sigma_{sc} $.
\begin{proposition}\label{lem:Lemma7}
For all $ k \in \mathbb{N} $:
\begin{equation}
\frac{1}{N^2}  \mathbb{E}\left[\left( \mathrm{tr\,}A^{k}\right)^2 \right]  \rightarrow \left\{
\begin{array}{cc}
C_{k/2}^2 & \text{for }k\text{ even} \\
0 & \text{for }k\text{ odd}%
\end{array}%
\right.   \, .
\end{equation}
\end{proposition}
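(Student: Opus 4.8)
The plan is to expand the square of the trace as a double sum over closed walks and show that the dominant contribution factorizes into two independent copies of the first-moment computation, while all cross terms are lower order. Concretely, writing the normalization $A = N^{-1/2} X$, we have
\begin{equation*}
\frac{1}{N^2}\,\mathbb{E}\!\left[\left(\mathrm{tr}\,A^{k}\right)^2\right]
= \frac{1}{N^{k+2}} \sum_{\ui,\ui'} \mathbb{E}\!\left(X_N(i_1,i_2)\cdots X_N(i_k,i_1)\, X_N(i_1',i_2')\cdots X_N(i_k',i_1')\right),
\end{equation*}
where $\ui = (i_1,\dots,i_k)$ and $\ui' = (i_1',\dots,i_k')$ each range over $\{1,\dots,N\}^k$ and index two closed walks of length $k$. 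The strategy is to classify each pair of walks by its edge multigraph and count how many distinct indices appear, exactly as in the single-trace argument behind Proposition~\ref{lem:Lemma6}.

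First I would recall from the proof of Proposition~\ref{lem:Lemma6} the combinatorial input: for a single closed walk of length $k$, the contribution survives in the limit only when $k$ is even and the walk traces out a tree with $k/2$ edges each used exactly twice (a Dyck-path / non-crossing pairing), giving the Catalan number $C_{k/2}$ after accounting for the factor $N^{-k/2-1}$ and the approximate-uncorrelatedness estimate \eqref{eq:au2}. The key step here is to argue that the leading term in the double sum comes precisely from pairs $(\ui,\ui')$ whose two walks are \emph{vertex-disjoint} and each individually of the dominant tree type. For such disjoint pairs, the index sets are disjoint, so the edges of the primed walk are disjoint from those of the unprimed walk; applying \eqref{eq:au1} to the product (treating, say, the unpaired edges as the $u_\rho$-factors and the doubled edges via \eqref{eq:au2}) shows the expectation factorizes asymptotically into the product of the two single-walk expectations, each yielding $C_{k/2}$. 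Summing over the two independent free vertices gives $N^{k/2+1}\cdot N^{k/2+1} = N^{k+2}$ distinct index choices, which cancels the normalization and produces $C_{k/2}^2$ in the even case and $0$ in the odd case.

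The hard part will be controlling the \emph{cross terms}, i.e.\ pairs of walks that share at least one edge or vertex, and showing they are $o(1)$ after normalization. The essential mechanism is a counting argument: if the two walks are connected (share a vertex), the combined edge multigraph on the union of the two vertex sets can support at most (total vertices) $-1$ independent vertices while still having every edge multiplicity at least two, so the number of free index choices drops by at least one factor of $N$ relative to the disjoint case; meanwhile \eqref{eq:au1} guarantees that the expectation of the product of all the $X_N$-factors is bounded by $C_{\ell,m}\,N^{-\ell/2}$, where $\ell$ counts the edges appearing an odd number of times — so any walk-pair with a singleton edge is already suppressed, and any connected pair with all edges doubled loses a power of $N$ through the vertex count. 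I would organize this by first discarding all configurations with an odd-multiplicity edge using \eqref{eq:au1}, then among the all-even configurations separating the disjoint (leading) from the connected (subleading) ones by the standard rank/vertex-count inequality for multigraphs, and finally checking that the number of configuration \emph{types} is $N$-independent so that each $o(1)$ or $O(N^{-1})$ class contributes negligibly. The only genuinely new ingredient beyond the classical moment method is verifying that the correlation bounds \eqref{eq:au1}--\eqref{eq:au2} are strong enough to force the asymptotic factorization across the two walks; this is where the uniformity of the constants $C_{\ell,m}$ and the $N^{-\ell/2}$ decay rate are used crucially.
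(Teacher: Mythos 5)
Your overall strategy --- expanding $\frac{1}{N^2}\mathbb{E}[(\mathrm{tr}\,A^k)^2]$ over pairs of closed walks, isolating vertex-disjoint pairs of doubled trees as the leading term $C_{k/2}^2$, and killing everything else by combining \eqref{eq:au1} with vertex counting --- is the same as the paper's. But your treatment of the error terms has a genuine gap, in two parts. First, you misread \eqref{eq:au1}: you claim the bound $C_{\ell,m}N^{-\ell/2}$ where $\ell$ counts edges of \emph{odd} multiplicity. In Definition~\ref{def:apcor} the pairs $(i_\nu,j_\nu)$ in the first product must be pairwise disjoint \emph{and} disjoint from the $(u_\rho,v_\rho)$, so a factor $X_N(e)^3$ cannot be split between the two products: only edges of multiplicity exactly one earn a factor $N^{-1/2}$. (Relatedly, you propose to put the unpaired edges among the $u_\rho$-factors, which is backwards --- they must be the first-product factors, or the gain is forfeited.) Second, and more seriously, the claim that a walk-pair with a singleton edge is ``already suppressed'' by \eqref{eq:au1} is a non sequitur: the gain $N^{-s/2}$ (with $s$ the number of simple edges) must beat the entropy $N^{r}$ of choosing the $r=\rho(\underline{i})+\rho(\underline{j})$ distinct vertices, so what you actually need is $r-s/2<k+2$ whenever $s\geq 1$. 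This is not automatic: for vertex-disjoint pairs a naive spanning-tree count (each component has $r_c\le d_c+1$ distinct vertices, $d_c\le (k+s_c)/2$ distinct edges) only gives $r-s/2\le k+2$ with possible equality, and excluding equality when $s\geq 1$ is precisely the equality statement of Corollary~\ref{cor:i}, which rests on the graph-theoretic Proposition~\ref{lem:graph}. So the combinatorial input you reserve for the ``all-even'' step is in fact indispensable in the very step you assign to \eqref{eq:au1} alone; as written, that step fails.

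For comparison, the paper splits only according to $V_{\underline{i}}\cap V_{\underline{j}}\neq\emptyset$ or $=\emptyset$. For intersecting pairs it observes that $\mathcal{G}_{\underline{i}}\cup\mathcal{G}_{\underline{j}}$ is connected with all degrees even, hence is itself a Eulerian graph with $2k$ edges, so the whole intersecting sum is dominated by the sum over Eulerian circuits of length $2k$ and Corollary~\ref{cor:cor9} gives $O(1/N)$ --- no case distinction between simple-edge and all-doubled configurations is ever needed there. For disjoint pairs it uses the pair analogue of Lemma~\ref{lem:expa} together with Corollary~\ref{cor:i} applied to each walk separately. Your elementary count for connected all-doubled pairs (at most $k$ distinct edges, hence at most $k+1$ vertices) is correct, and in fact the connected case with simple edges can also be handled elementarily ($r\le d+1$ and $2k\ge 2d-s$ give $r-s/2\le k+1$); it is the \emph{disjoint} error terms that genuinely require Proposition~\ref{lem:graph}/Corollary~\ref{cor:i}. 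If you replace ``odd multiplicity'' by ``multiplicity one'' throughout and invoke Corollary~\ref{cor:i} per walk for disjoint pairs, your plan becomes a complete proof essentially identical to the paper's.
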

\bigskip
Observe that Proposition \ref{lem:Lemma6} and Proposition \ref{lem:Lemma7} together imply that
\begin{equation}
 \mathbb{E}\left[\left( \frac{1}{N} \mathrm{tr\,}A^{k}\right)^2 \right] -  \mathbb{E}\left[\left( \frac{1}{N} \mathrm{tr\,}A^{k}\right) \right]^2 \to 0
\end{equation}
which allows us to conclude weak convergence in probability from weak convergence in the average (see \cite{AGZ}).

For a proof of the above propositions, which can be found in the subsequent subsections, we write%
\begin{eqnarray}
\frac{1}{N}\mathrm{tr\,}A^{k}~ &=&~\frac{1}{N^{1+k/2}}\,\sum_{i_{1},i_{2},%
\ldots ,i_{k}=1}^{N}\,X_{N}(i_{1},i_{2})\,X_{N}(i_{2},i_{3})\,\ldots
\,X_{N}(i_{k},i_{1})  \notag \\
&=&~\frac{1}{N^{1+k/2}}\,\sum_{i_{1},i_{2},\ldots ,i_{k}=1}^{N}\,X_{N}(%
\underline{i})\,  \label{eq:grSumme}
\end{eqnarray}
where we used the short hand notation
\begin{equation}
X_{N}(\underline{i})=X_{N}(i_{1},i_{2})\,X_{N}(i_{2},i_{3})\,\ldots
\,X_{N}(i_{k},i_{1})  \label{eq:XNi}
\end{equation}%
for \underline{$i$}$=(i_{1},i_{2},\ldots ,i_{k})$.
The associated $k+1 $-tupel $(i_{1},i_{2},\ldots ,i_{k},i_{1}) $ constitutes a Eulerian circuit
through the graph
$\mathcal{G}_{\underline{i}}$ (undirected, not necessary simple) with vertex set $%
V_{\underline{i}} =\left\{ i_{1},i_{2},\ldots ,i_{k}\right\} $ and an edge between the vertices $v$ and $w$
whenever $\left\{ v,w\right\} =\left\{ i_{j},i_{j+1}\right\} $ for some $%
j=1,\ldots k$ with the understanding that $i_{k+1}=i_{1}$, a convention we
keep for the rest of this paper.
More precisely, the number of edges $\nu
(v,w)$ linking the vertex $v$ and the vertex $w$ is given by
\begin{equation}
\nu (v,w)=\#\left\{ m\,|\,\left\{ v,w\right\} =\left\{ i_{m},i_{m+1}\right\}
\right\} ~~.  \label{eq:nuij}
\end{equation}
Let us call edges $e_{1}\neq e_{2}$ \emph{parallel} if they link the same
vertices. An edge which does not have a parallel edge is called \emph{simple}%
.
So, if $e$ links $v$ and $w$, then $e$ is a simple edge iff $\nu (v,w)=1$.
The graph $\mathcal{G}_{\underline{i}}$ may contain \emph{loops}, i.e. edges connecting a
vertex $v$ with itself. By a \emph{proper edge }we mean an edge which is not
a loop.
We set $\rho ($%
\underline{$i$}$)=\#\left\{ i_{1},i_{2},\ldots ,i_{k}\right\} $ the cardinality of the vertex set $ V_{\underline{i}}  $, i.e., the number of
(distinct) vertices the Eulerian circuit visits.
We also denote by $ \sigma(\underline{i}) $ the number of simple edges in the Eulerian circuit  $(i_{1},i_{2},\ldots ,i_{k},i_{1}) $.
With this notation we
can write (\ref{eq:grSumme}) as%
\begin{align}
\frac{1}{N}\mathrm{tr\,}A^{k}~& =~\,\frac{1}{N^{1+k/2}}\sum_{r=1}^{k}\sum_{%
\underline{i}:\,\rho (\underline{i})=r}\,X_{N}(\underline{i})   \notag \\
& =~ \frac{1}{N^{1+k/2}}\sum_{r=1}^{k}  \sum_{s=0}^k \sum_{%
\substack{ \rho(\underline{i})= r \\ \; \sigma(\underline{i}) = s   }}\,X_{N}(\underline{i})  .
\label{eq:grSumme2}
\end{align}

The sum extends over all Eulerian circuits with $k$
edges and vertex set $V_{\underline{i}}\subset \left\{ 1,2,\ldots ,N\right\} .$
To simplify future references we set

\begin{align}
S_{r,s}\,&=\,\sum_{\substack{ \rho(\underline{i})= r \\  \sigma(\underline{i}) = s   }}\left| \mathbb{E}\left[ X_N(\underline{i}) \right]\right| \\
\intertext{and}
S_{r}\,&=\,\sum_{\rho(\underline{i})= r}\left| \mathbb{E}\left[ X_N(\underline{i}) \right]\right|
\end{align}

{\setlength\parindent{0mm}
Obviously $\rho(\ui)$ and $\sigma(\ui)$ are integers with %
$1\leq\rho (\underline{i})\leq k$ and\goodbreak $ 0\leq\sigma(\underline{i})\leq k$. For $\rho(\underline{i})=r<N$ there are ${N\choose r}%
\le N^{r}$ choices for the  vertex set $V_{\underline{i}}$. Moreover,}
\begin{equation}\label{eq:equivc}
\# \left\{ \underline{i} \mid \rho(\underline{i} ) = r \right\} \leq \eta_k \, N^r
\end{equation}
where $ \eta_k $ is
the number of equivalence classes of Eulerian circuits of length~$k$.
We call two Eulerian circuits $ (i_{1},i_{2},\ldots ,i_{k},i_{1}) $ and $ (j_{1},j_{2},\ldots ,j_{k},j_{1}) $ with corresponding vertex sets $ V_{\underline{i}} $ and $ V_{\underline{j}} $
\emph{equivalent} if there is a bijection $ \varphi : V_{\underline{i}} \to V_{\underline{j}}  $ such that $ \varphi(i_m) = j_m $ for all $ m $.

\subsection{Proof of Proposition~\ref{lem:Lemma6}}
We investigate the expectation value of the sum~\eqref{eq:grSumme2}.
\begin{lemma}\label{lem:expa}
For all $ k \in \mathbb{N} $ there is some $ D_k < \infty $ such that for all $ N $:
\begin{equation}
S_{r,s}\,=\,\sum_{\substack{ \rho(\underline{i})= r \\ \; \sigma(\underline{i}) = s   }}\left| \mathbb{E}\left[ X_N(\underline{i}) \right]\right| \ \leq \ D_k \, N^{r- s/2} \, .
\end{equation}
\end{lemma}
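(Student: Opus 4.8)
The plan is to bound the sum $S_{r,s}$ by controlling each individual expectation $|\mathbb{E}[X_N(\underline{i})]|$ through the approximate uncorrelatedness estimate \eqref{eq:au1}, and then multiplying by the number of tuples $\underline{i}$ that contribute. The whole point is that \eqref{eq:au1} is tailored so that the decay exponent $N^{-\ell/2}$ is exactly governed by the number of \emph{simple} edges, so that matching $\ell$ to $\sigma(\underline{i}) = s$ produces the claimed power $N^{r-s/2}$ when paired with the vertex count.

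First I would reorganize the factors of $X_N(\underline{i}) = \prod_{j=1}^{k} X_N(i_j,i_{j+1})$ according to the edge multiplicities in $\mathcal{G}_{\underline{i}}$. By definition $\sigma(\underline{i}) = s$ is the number of edges with $\nu(v,w)=1$; these $s$ factors correspond to $s$ pairwise distinct edges, each of which is moreover distinct from every edge of multiplicity $\ge 2$ (a simple edge has no parallel, hence differs from all other edges of the circuit). Collecting the $s$ simple factors into the first product and the remaining $k-s$ factors coming from the non-simple (repeated) edges into the second, the expression $X_N(\underline{i})$ takes exactly the form appearing on the left of \eqref{eq:au1} with $\ell = s$ pairwise distinct, separated edges and $m = k - s$ further factors (whose index pairs may repeat). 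Hence
\begin{equation*}
\left| \mathbb{E}\left[ X_N(\underline{i}) \right]\right| \ \leq\ \frac{C_{s,\,k-s}}{N^{s/2}} \ \leq\ \frac{K_k}{N^{s/2}}, \qquad K_k := \max_{\ell + m \le k} C_{\ell,m},
\end{equation*}
a bound depending only on $k$ and on $s=\sigma(\underline{i})$, but not on $r$ or on the specific $\underline{i}$.

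Next I would count the summands. Restricting to $\sigma(\underline{i})=s$ only shrinks the index set, so \eqref{eq:equivc} gives $\#\{\underline{i} \mid \rho(\underline{i})=r,\ \sigma(\underline{i})=s\} \le \#\{\underline{i}\mid \rho(\underline{i})=r\} \le \eta_k N^r$. Combining this with the per-term bound yields
\begin{equation*}
S_{r,s} \ = \sum_{\substack{\rho(\underline{i})=r \\ \sigma(\underline{i})=s}} \left| \mathbb{E}\left[ X_N(\underline{i}) \right]\right| \ \leq\ \eta_k N^{r} \cdot \frac{K_k}{N^{s/2}} \ =\ D_k\, N^{r-s/2},
\end{equation*}
with $D_k := \eta_k K_k < \infty$, which is the assertion. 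The only step requiring genuine care is the grouping in the per-term estimate: one must verify that the simple edges really furnish the $\ell$ pairwise-distinct, separated factors demanded by the hypothesis, while all higher-multiplicity edges are safely absorbed into the unrestricted product $\prod_{\rho=1}^m X_N(u_\rho,v_\rho)$. This is immediate from the definitions of $\nu(v,w)$ and of a simple edge, but it is precisely the place where the combinatorics of the Eulerian circuit must be matched to the analytic assumption \eqref{eq:au1}; the counting and the final arithmetic are then routine.
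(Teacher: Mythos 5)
Your proof is correct and takes essentially the same route as the paper: the per-tuple estimate $\left| \mathbb{E}\left[ X_N(\underline{i}) \right]\right| \leq K_k\, N^{-s/2}$, obtained by feeding the $s$ simple-edge factors into the first product of \eqref{eq:au1} (with $\ell = s$, $m = k-s$) and absorbing the repeated-edge factors into the second, combined with the counting bound \eqref{eq:equivc}, is exactly what the paper's one-line proof invokes. The grouping of factors by edge multiplicity that you spell out is precisely the detail the paper leaves implicit, so your write-up is a faithful (and more explicit) version of the published argument.
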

\begin{proof}
The assertion follows
using \eqref{eq:au1} from the estimate $ \left| \mathbb{E}\left[X_N(\underline{i})\right]\right| \leq \widetilde D_k N^{-s/2} $ together with~\eqref{eq:equivc}.
\end{proof}

Evidently, in case $ r - s/2 < 1 + k/2 $ the term $\frac{1}{N^{k/2+1}}\,S_{r,s}$ vanishes in the limit. This is in particular the case if $ r < k/2 +1 $.
If $r> k/2+1$ we use the following proposition which is one of the key ideas of our
proof:

\begin{proposition}
\label{lem:graph}Let $\mathcal{G}=(V,E)$ denote a Eulerian graph with $r=\#V$
and $k=\#E$, and let $t$ be a positive integer such that $r\,>$ $\frac{k}{2}%
+t$ then $\mathcal{G}$ has at least $2t+1$ simple proper edges.
\end{proposition}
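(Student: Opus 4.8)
The plan is to bound the number of simple proper edges from below by combining a spanning-tree count in the underlying simple graph with the parity constraints forced by the Eulerian (even-degree) structure. Write $G_s=(V,E_s)$ for the simple graph on the same vertex set whose edges are the \emph{distinct} proper edges of $\mathcal{G}$, i.e.\ we put one edge $\{v,w\}$ whenever $\nu(v,w)\geq 1$ and $v\neq w$. Since $\mathcal{G}$ carries an Eulerian circuit it is connected with all degrees even, so $G_s$ is connected and has $p:=|E_s|\geq r-1$ edges; write $p=r-1+\gamma$ with cyclomatic number $\gamma\geq 0$. Let $a$ denote the number of simple proper edges (those with $\nu(v,w)=1$) and $\ell=\sum_v \nu(v,v)$ the number of loops counted with multiplicity. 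The goal is to show $a\geq 2t+1$.

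The key step is a parity observation: in any graph with all even degrees, the number of proper edges crossing an arbitrary cut $(A,B)$, $B=V\setminus A$, is even, because $\sum_{v\in A}\deg(v)$ is even and equals twice the number of edges inside $A$ plus the number of crossing edges. I would apply this to a bridge $e=\{u,w\}$ of $G_s$: deleting $e$ splits $V$ into $A\ni u$ and $B\ni w$, and since $e$ is the \emph{only} edge of $G_s$ across this cut, every proper edge of $\mathcal{G}$ crossing it is a parallel copy of $e$. The parity observation then forces $\nu(u,w)$ to be even, hence $\nu(u,w)\geq 2$, so $e$ is not simple. Consequently every simple proper edge lies on a cycle of $G_s$; in particular, if $G_s$ were a tree then $a=0$.

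Finally I would turn this into the quantitative bound. Counting the edges of $\mathcal{G}$ with multiplicity, the $a$ simple proper edges contribute $a$, the $p-a$ multiple proper edges contribute at least $2(p-a)$, and the loops contribute $\ell$, so $k\geq a+2(p-a)+\ell$, whence $a\geq 2p-k+\ell=(2r-k)-2+2\gamma+\ell$. The hypothesis $r>\tfrac{k}{2}+t$ reads $2r-k>2t$, and as $2r-k$ is an integer this gives $2r-k\geq 2t+1$. It remains to exclude $\gamma=0$: if $\gamma=0$ then $G_s$ is a tree, so by the previous step $a=0$ and each of its $r-1$ edges is doubled in $\mathcal{G}$, forcing $k\geq 2(r-1)$ and thus $2r-k\leq 2$, which contradicts $2r-k\geq 2t+1\geq 3$. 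Hence $\gamma\geq 1$, and the displayed inequality yields $a\geq(2t+1)-2+2\gamma+\ell\geq 2t+1$, as claimed. The main obstacle, and the genuinely new ingredient, is the cut-parity argument linking bridges of the simple graph to forced even multiplicities in $\mathcal{G}$; everything else is bookkeeping with the cyclomatic number.
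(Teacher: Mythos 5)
Your proof is correct, and it takes a genuinely different route from the paper's. The paper argues by induction on the number of edges of multiplicity at least two: after discarding loops, it deletes two parallel copies of some multi-edge; if the graph stays connected the parameter improves from $t$ to $t+1$, while if it disconnects, the inequality $r_1+r_2>\frac{k_1+1}{2}+\frac{k_2+1}{2}+t$ lets one split $t$ into integers $t_1+t_2$ and apply the inductive hypothesis to each Eulerian piece. You instead give a direct, non-inductive count: passing to the underlying simple loopless graph $G_s$ with $p=r-1+\gamma$ edges, the multiplicity bound $k\geq a+2(p-a)+\ell$ yields $a\geq (2r-k)-2+2\gamma+\ell$, integrality upgrades the hypothesis to $2r-k\geq 2t+1$, and the only structural ingredient is your cut-parity lemma (in an even-degree multigraph every edge cut is crossed by an even number of edges, hence a bridge of $G_s$ must carry multiplicity at least $2$ in $\mathcal{G}$), which serves precisely to exclude the tree case $\gamma=0$. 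Where the paper's induction handles connectivity dynamically in its disconnection case, you isolate it once and for all in that single parity observation. Your approach buys a sharper, more informative bound, namely $a\geq 2t-1+2\gamma+\ell$, so every independent cycle of $G_s$ beyond the first and every loop forces additional simple edges, and it avoids the bookkeeping of partitioning $t$; the paper's surgery argument, in turn, needs no global edge accounting and makes visible the local mechanism (removal of parallel pairs) by which simple edges are forced. Both proofs use the Eulerian hypothesis only through connectedness and evenness of degrees.
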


We note the following Corollary to Proposition~\ref{lem:graph}.

\begin{corollary}\label{cor:i}
For each $k$-tuple $\ui$ we have
\begin{equation*}
\rho(\ui)-\sigma(\ui)/2\leq k/2+1\;.
\end{equation*}
Moreover $\rho(\ui)-\sigma(\ui)/2=k/2+1 $
iff $\rho(\ui)=k/2+1 $ and $\sigma(\ui)=0 $.
\end{corollary}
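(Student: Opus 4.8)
The plan is to derive Corollary~\ref{cor:i} as a direct consequence of Proposition~\ref{lem:graph} applied to the graph $\cG_{\ui}$, which by construction is Eulerian with $r=\rho(\ui)$ vertices and $k$ edges. First I would observe that $\sigma(\ui)$, the number of simple edges, counts \emph{all} simple edges, while Proposition~\ref{lem:graph} produces a lower bound only on simple \emph{proper} edges (non-loops). Since every simple proper edge is in particular a simple edge, any lower bound on the former is also a lower bound on $\sigma(\ui)$; this is the small observation that lets me pass between the two counts, and I expect it to be the only subtle point.

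For the inequality $\rho(\ui)-\sigma(\ui)/2\leq k/2+1$, I would argue by contraposition. Suppose instead that $\rho(\ui)-\sigma(\ui)/2>k/2+1$, i.e.\ $r>k/2+1+\sigma(\ui)/2$. The idea is to apply Proposition~\ref{lem:graph} with the right choice of integer $t$. Taking $t=\lfloor \sigma(\ui)/2\rfloor +1$ gives $t\geq \sigma(\ui)/2 + 1/2 > \sigma(\ui)/2$, hence $k/2+t > k/2 + \sigma(\ui)/2$, and combined with the assumption $r>k/2+1+\sigma(\ui)/2 \geq k/2 + t$ one checks $r>k/2+t$. Proposition~\ref{lem:graph} then forces at least $2t+1\geq \sigma(\ui)+1$ simple proper edges, hence at least $\sigma(\ui)+1$ simple edges, contradicting the definition of $\sigma(\ui)$ as the total number of simple edges. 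The only thing to be careful about is that $t$ is a genuine positive integer and that the arithmetic with the floor function cleanly yields $2t+1>\sigma(\ui)$; this is the routine bookkeeping I would verify but not belabor.

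For the equality case, I would analyze when $\rho(\ui)-\sigma(\ui)/2=k/2+1$ can hold. The reverse implication is immediate: if $\rho(\ui)=k/2+1$ and $\sigma(\ui)=0$ then $\rho(\ui)-\sigma(\ui)/2=k/2+1$. For the forward direction I would show that if $\sigma(\ui)\geq 1$, the boundary value $k/2+1$ cannot be attained. The cleanest route is again through Proposition~\ref{lem:graph}: assuming $\sigma(\ui)\geq 1$ and $\rho(\ui)=k/2+1+\sigma(\ui)/2$, I would pick $t=\lceil \sigma(\ui)/2\rceil$ (so $t\geq 1$ is a positive integer) and verify $r>k/2+t$ when $\sigma(\ui)\geq 1$, forcing at least $2t+1>\sigma(\ui)$ simple proper edges and thus more simple edges than $\sigma(\ui)$ — a contradiction. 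This pins down $\sigma(\ui)=0$, and substituting back gives $\rho(\ui)=k/2+1$.

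The main obstacle, such as it is, lies not in any deep argument but in the careful matching of the strict inequality $r>k/2+t$ in Proposition~\ref{lem:graph} with the non-strict target $\rho(\ui)-\sigma(\ui)/2\leq k/2+1$, and in choosing $t$ so that $2t+1$ just exceeds $\sigma(\ui)$. Parity of $k$ (even versus odd) and of $\sigma(\ui)$ enters the floor/ceiling estimates, so I would treat these cases explicitly to confirm the bound is tight exactly at the stated configuration. Everything reduces to elementary integer arithmetic once Proposition~\ref{lem:graph} is invoked, so I anticipate the proof is short and the genuine content is entirely carried by the graph-theoretic proposition proved separately.
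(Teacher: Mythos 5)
Your proposal is correct and takes essentially the same approach as the paper: both deduce the corollary purely from Proposition~\ref{lem:graph} by choosing a suitable positive integer $t$ and using that the $2t+1$ simple proper edges it guarantees are in particular simple edges, so they cannot exceed $\sigma(\ui)$. The only difference is organizational --- the paper argues directly, sandwiching $\rho(\ui)$ as $k/2+t<\rho(\ui)\leq k/2+t+1$ when $\rho(\ui)>k/2+1$ (which gives the strict inequality and hence the equality characterization in one stroke), while you argue by contraposition with $t$ chosen from $\sigma(\ui)$ via floor/ceiling; both are sound.
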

\begin{proof}[Proof (Corollary \protect\ref{cor:i})]
Set $r=\rho(\ui)$ and $s=\sigma(\ui)$.\\
If $r \leq k/2+1$ the assertion is evident.\\
If $ r > k/2 + 1 $ there is some $ t \in \mathbb{N} $ such that
$$
\frac{k}{2} + t \ < \  r \ \leq \  \frac{k}{2} + t + 1 \, .
$$
Proposition~\ref{lem:graph} hence implies
$$
r - \frac{s}{2} \leq r - t - \frac{1}{2} \leq \frac{k}{2} + \frac{1}{2} < \frac{k}{2} + 1 \, .
$$
\end{proof}

Postponing the proof of Proposition~\ref{lem:graph}, we continue to prove Proposition~\ref{lem:Lemma6}.\\

From Lemma~\ref{lem:expa} and Corollary~\ref{cor:i} we learn that
\begin{equation}
\frac{1}{N^{k/2+1}}\,S_{r,s}~\to~0
\end{equation}
unless both $r=k/2+1$ and $s=0$.

Thus
it remains to compute the number of $k$-tuples $\underline{i},$ with $\rho (%
\underline{i})=1+k/2$ such that the corresponding graph is a \ `doubled'
planar tree.
There are $C_{\frac{k}{2}}$ different rooted planar trees with $\frac{k}{2}$
(simple) edges, where $C_{\ell }=\frac{1}{\ell +1}\binom{2\ell }{\ell }$ are
the Catalan numbers (see e.g. \cite{Stanley}, Exercise 6.19 e, p.\ 219-220).
Each index $i_{\nu }$ is chosen from the set $\left\{ 1,\ldots ,N\right\} $.
As we have $1+k/2$ different indices there are $\frac{N!}{(N-1-k/2)!}$ such
choices. From this one sees that%
\begin{eqnarray}
\lim_{N\rightarrow \infty }\frac{1}{N}\mathbb{E}\,\left( \mathrm{tr\,}%
A^{k}\right) &=&\lim_{N\rightarrow \infty }~\,\sum_{r=1}^{k}\,\frac{1}{%
N^{1+k/2}}\sum_{\rho (\underline{i})=r}\mathbb{E}\,\left(
X_{N}(\underline{i})\right)\label{eq:sum} \\
&=&\lim_{N\rightarrow \infty }~\,\,\frac{1}{N^{1+k/2}}\sum_{\rho (\underline{i})=1+k/2, \sigma(\underline{i}) = 0 }\mathbb{E}\,\left( X_{N}(\underline{i})\right)\notag
\\
&=&\lim_{N\rightarrow \infty }~\,\,\frac{1}{N^{1+k/2}}~\frac{N!}{(N-1-k/2)!}%
~C_{\frac{k}{2}} \, = ~C_{\frac{k}{2}}\notag .
\end{eqnarray}
This  ends the proof of Lemma~\ref{lem:Lemma6} modulo the proof of the Proposition \ref{lem:graph}.
For future purpose, we note that the above proof also shows the slightly stronger assertion.
\begin{corollary}\label{cor:cor9}
For all $ k \in \mathbb{N} $:
\begin{equation}\label{eq:sumEC}
\lim_{N \to \infty} \frac{1}{N^{1+k/2}} \sum_{\underline{i}} \left| \mathbb{E}\left[X_N(\underline{i})\right]\right| = \left\{ \begin{array}{cc}
C_{k/2} & \text{for }k\text{ even} \\
0 & \text{for }k\text{ odd}%
\end{array}%
\right.   \,
\end{equation}
where the above sum (in \eqref{eq:sumEC}) extends over all Eulerian circuits of length $ k $.
\end{corollary}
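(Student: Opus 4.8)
The plan is to rerun the proof of Proposition~\ref{lem:Lemma6}, observing that it already passed entirely through the quantities $S_{r,s}$, which carry the absolute value \emph{outside} the expectation. The only genuinely new point is to verify that passing from $\mathbb{E}[X_N(\underline{i})]$ to $\bigl|\mathbb{E}[X_N(\underline{i})]\bigr|$ leaves the leading contribution unchanged.

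First I would split the sum over all Eulerian circuits according to the number of vertices and simple edges,
\begin{equation*}
\frac{1}{N^{1+k/2}}\sum_{\underline{i}}\bigl|\mathbb{E}[X_N(\underline{i})]\bigr| \;=\; \frac{1}{N^{1+k/2}}\sum_{r=1}^{k}\sum_{s=0}^{k} S_{r,s}\,.
\end{equation*}
By Lemma~\ref{lem:expa} we have $S_{r,s}\le D_k\,N^{r-s/2}$, and Corollary~\ref{cor:i} gives $r-s/2\le k/2+1$ with equality exactly when $(r,s)=(k/2+1,0)$. Hence every summand with $(r,s)\neq(k/2+1,0)$ contributes $O\bigl(N^{r-s/2-1-k/2}\bigr)\to 0$. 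For $k$ odd the value $k/2+1$ is not an integer, so no leading index class survives and the limit equals $0$, as claimed.

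For $k$ even it remains to analyse the single surviving class $(r,s)=(k/2+1,0)$, whose graphs are precisely the doubled planar trees. For such $\underline{i}$ every edge has multiplicity two, so $X_N(\underline{i})=\prod_{\nu=1}^{k/2} X_N(i_\nu,j_\nu)^2$ is a product of squares over the $k/2$ distinct (pairwise disjoint) edges. By \eqref{eq:au2} this expectation converges to $1$ uniformly in the choice of edges; in particular it is positive for large $N$, so $\bigl|\mathbb{E}[X_N(\underline{i})]\bigr|=\mathbb{E}[X_N(\underline{i})]\to 1$ and the absolute value is harmless. The counting is then identical to \eqref{eq:sum}: there are $C_{k/2}$ rooted planar trees, each admitting $\frac{N!}{(N-1-k/2)!}\sim N^{1+k/2}$ labellings by distinct vertices, whence $\tfrac{1}{N^{1+k/2}}S_{k/2+1,0}\to C_{k/2}$.

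The main obstacle, insofar as there is one, is isolated in the positivity observation of the previous paragraph: once one knows that the dominant (doubled-tree) terms have expectation asymptotically equal to $+1$, the absolute values in the statement play no role, and the corollary reduces to the computation already carried out for Proposition~\ref{lem:Lemma6}. Everything else is inherited verbatim, since the suppression of the non-leading classes in that proof was argued through the absolute-value quantities $S_{r,s}$ to begin with.
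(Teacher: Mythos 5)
Your proposal is correct and follows the paper's own route: the paper proves this corollary by remarking that the subleading terms were already killed via the absolute-value quantities $S_{r,s}$ in the proof of Proposition~\ref{lem:Lemma6}, and that the leading (doubled-tree) contribution is non-negative, which is exactly your observation that $X_N(\underline{i})$ is there a product of squares so the absolute value is harmless. Your write-up merely makes explicit the same two points the paper compresses into one sentence.
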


For a proof we note that the leading contribution in the sum \eqref{eq:sum} is non-negative. The subleading terms were already shown to vanish.

\bigskip
\begin{proof}[Proof (Proposition \protect\ref{lem:graph})]
If the graph $\cG=(V,E)$ contains loops, we
delete all loops and call the new graph $(V,\tilde{E})$. This graph is still
Eulerian and satisfies $\#V > \#\tilde{E}/2+t.$ Thus without loss of
generality we may assume that $(V,E)$ contains no loops.

We proceed by induction on the number of edges with multiplicity greater
than one. If there is no such edge, then the number of simple edges is $k$.
Since $\cG$ is Eulerian we have $k\geq r$ and $r>\frac{k%
}{2}+t$ implies $k>2t$ and thus the assertion.

Hence, assume there exists an edge of multiplicity $m\geq 2$. If the graph
that arises from the deletion of $2$ copies of this edge is still connected,
then the resulting graph is Eulerian and denoting its number of edges by $%
k^{\prime }$ we have
\begin{equation*}
r>\frac{k}{2}+t\geq \frac{k^{\prime }}{2}+t+1
\end{equation*}

Thus, by inductive assumption, we find at least $2t+3$ edges without
parallels in the reduced graph and hence at least $2t+2$ in $\cG$.

We are left with the case that the removal of the edges disconnects the
graph into two Eulerian graphs $\cG_{1}=(V_{1},E_{1})$ and $%
\cG_{2}=(V_{2},E_{2}) $. We use the abbreviations $r_{i}:=\#V_{i}$ and $%
k_{i}:=\#E_{i}$. Then:%
\begin{equation*}
r=r_{1}+r_{2}>\frac{k}{2}+t=\frac{k_{1}+1}{2}+\frac{k_{2}+1}{2}+t
\end{equation*}

Hence we can partition $t$ into integers $t_{1},t_{2}$ such that
\begin{equation*}
r_{1}>\frac{k_{1}}{2}+t_{1}\text{ \ and \ }r_{2}>\frac{k_{2}}{2}+t_{2}
\end{equation*}

If, say $t_{1}\leq 0$, then $t_{2}\geq t$ and the inductive assumption
yields at least $2t+1$ simple edges in $\cG_{2}$ and hence in $\cG$.

Otherwise we find at least $2t_{i}+1$ simple edges in each of the
$\cG_{i}$ and thus in total $2t+2>2t+1$ such edges in $\cG$.
\end{proof}

\subsection{Proof of Proposition~\ref{lem:Lemma7}}

We write the expectation value
\begin{equation}\label{eq:double}
\frac{1}{N^2} \mathbb{E}\left[ \left( \mathrm{tr\,}A^k\right)^2 \right] = \frac{1}{N^{2+k} } \sum_{\underline{i}, \underline{j}} \mathbb{E}\left[ X_N(\underline{i}) \, X_N( \underline{j}) \right]
\end{equation}
where the sum extends over all pairs of Eulerian circuits $ (i_1, \dots , i_k, i_1) $ and $  (j_1, \dots , j_k, j_1) $ of length $k$ with vertex sets $ V_{\underline{i}} $ and $ V_{\underline{j}} $ in $ \{ 1, \dots , N \} $. We distinguish two cases.

In case $ V_{\underline{i}} \cap V_{\underline{j}} \neq \emptyset $ the union of the corresponding Eulerian graphs $ \mathcal{G}_{\underline{i}} \cup \mathcal{G}_{\underline{j}} $ is connected and each vertex has even degree. Therefore this union is itself a Eulerian graph with $ 2k $ edges. The corresponding contribution to the sum~\eqref{eq:double} is then estimated by extending the summation to all Eulerian circuits $ \underline{\ell} $ of length $ 2k $:
\begin{equation}
\frac{1}{N^{2+k} } \sum_{\substack{\underline{i}, \underline{j} \\ V_{\underline{i}} \cap V_{\underline{j}} \neq \emptyset} } \left| \mathbb{E}\left[ X_N(\underline{i}) \, X_N( \underline{j}) \right] \right| \leq \frac{1}{N^{2+k} } \mkern-7mu \sum_{\underline{\ell} = (\ell_1, \dots, \ell_{2k})} \mkern-5mu\left| \mathbb{E}\left[X_N(\underline{\ell}) \right] \right| \leq \frac{\widetilde{C}_k}{N} \, .
\end{equation}
The last estimate is due to Corollary~\ref{cor:cor9}.

In case $ V_{\underline{i}} \cap V_{\underline{j}} = \emptyset $ we use the following analogue of Lemma~\ref{lem:expa}.
\begin{lemma}
For all $ k \in \mathbb{N} $ there is some $ D_k < \infty $ such that for all $ N $:
\begin{equation}
\sum_{\substack{ V_{\underline{i}} \cap V_{\underline{j}} = \emptyset  \\ \rho(\underline{i})= r_1  , \, \rho(\underline{j}) =r_2 \\ \; \sigma(\underline{i}) = s_1 , \, \sigma(\underline{j}) = s_2   }}\left| \mathbb{E}\left[ X_N(\underline{i}) \, X_N(\underline{j}) \right]\right| \ \leq \ D_k \, N^{r_1+r_2- s_1/2- s_2/2} \, ,
\end{equation}
where the sum extends over non-intersecting pairs of Eulerian circuits of length $ k $.
\end{lemma}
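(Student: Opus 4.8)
The plan is to mimic the structure of Lemma~\ref{lem:expa} but now for a pair of disjoint Eulerian circuits. The key observation is that when $V_{\underline{i}} \cap V_{\underline{j}} = \emptyset$, the random variables $\{X_N(i_\nu, i_{\nu+1})\}$ appearing in $X_N(\underline{i})$ are indexed by edges that are entirely disjoint (as unordered index pairs) from those appearing in $X_N(\underline{j})$. This disjointness is precisely the hypothesis needed to apply the approximate uncorrelatedness bound \eqref{eq:au1} of Definition~\ref{def:apcor}. Concretely, the product $X_N(\underline{i})\, X_N(\underline{j})$ can be written as a single product over the combined edge multiset, split into a group of factors indexed by pairwise-disjoint edges and a remaining group, so that \eqref{eq:au1} applies directly.

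\medskip

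The main step is to count, for fixed $\underline{i}$, the number of distinct edges appearing with multiplicity one across both circuits and thereby read off the correct power of $N^{-1/2}$. First I would note that the simple edges of $\mathcal{G}_{\underline{i}}$ and the simple edges of $\mathcal{G}_{\underline{j}}$ remain simple in the disjoint union, since no vertex---and hence no edge---is shared between the two graphs. Thus the combined configuration has exactly $s_1 + s_2$ simple edges. Choosing a maximal pairwise-disjoint subsequence of edge-labels among these simple edges and treating the rest as the second block in \eqref{eq:au1}, the bound \eqref{eq:au1} yields
\begin{equation}
\left| \mathbb{E}\left[ X_N(\underline{i})\, X_N(\underline{j}) \right] \right| \ \leq \ \widetilde{D}_k \, N^{-(s_1+s_2)/2} \, .
\end{equation}
This is the exact analogue of the pointwise estimate used in Lemma~\ref{lem:expa}, and it is where the disjointness hypothesis does its essential work.

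\medskip

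It then remains to count the number of pairs $(\underline{i}, \underline{j})$ subject to the constraints $\rho(\underline{i}) = r_1$, $\rho(\underline{j}) = r_2$, $\sigma(\underline{i}) = s_1$, $\sigma(\underline{j}) = s_2$, and $V_{\underline{i}} \cap V_{\underline{j}} = \emptyset$. Since $\underline{i}$ and $\underline{j}$ each have a vertex set of size $r_1$ and $r_2$ respectively drawn from $\{1,\dots,N\}$, the number of such pairs is bounded using \eqref{eq:equivc} separately for each circuit: there are at most $\eta_k N^{r_1}$ choices for $\underline{i}$ and at most $\eta_k N^{r_2}$ choices for $\underline{j}$, giving a combinatorial factor of order $N^{r_1 + r_2}$ (the disjointness constraint only decreases the count). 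Multiplying the pointwise bound by this count yields $D_k \, N^{r_1 + r_2 - s_1/2 - s_2/2}$ with $D_k = \eta_k^2 \widetilde{D}_k$, as claimed.

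\medskip

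The part requiring the most care is verifying that every simple edge truly remains simple in the union and that no new coincidences of index pairs are introduced—this is exactly the point at which $V_{\underline{i}} \cap V_{\underline{j}} = \emptyset$ is indispensable, for if the vertex sets overlapped, a simple edge of one circuit could be parallel to an edge of the other and the clean factorization into a disjoint block plus a remainder would fail. Once this is secured, the estimate is a routine combination of \eqref{eq:au1} and \eqref{eq:equivc}, paralleling Lemma~\ref{lem:expa} verbatim.
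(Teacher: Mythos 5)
Your proposal is correct and follows essentially the same route as the paper, which proves this lemma simply by mirroring Lemma~\ref{lem:expa}: the pointwise bound $\left| \mathbb{E}\left[ X_N(\underline{i})\, X_N(\underline{j}) \right] \right| \leq \widetilde{D}_k N^{-(s_1+s_2)/2}$ from \eqref{eq:au1} (where disjointness of $V_{\underline{i}}$ and $V_{\underline{j}}$ guarantees that the $s_1+s_2$ simple edges form a pairwise disjoint family, disjoint from the remaining factors), multiplied by the counting estimate $\eta_k^2 N^{r_1+r_2}$ obtained from \eqref{eq:equivc} applied to each circuit separately. Your write-up merely makes explicit what the paper leaves implicit, including the key observation about why simple edges stay simple in the union.
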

The proof mirrors that of Lemma~\ref{lem:expa}.

From Lemma~\ref{lem:expa} we know that $ r_1 - s_1/2 \leq k/2 + 1 $ and likewise $ r_2 - s_2/2 \leq k/2 + 1 $. So the unique possibility that
$$
r_1 + r_2 - \frac{s_1+s_2}{2} \geq k + 2
$$
giving rise to a non-vanishing term in the limit, is that $ r_1 = r_2 = k/2 + 1 $ and $ s_1 = s_2 =0 $. Similarly as in the proof of Lemma~\ref{lem:Lemma6} we conclude that in this case $ \underline{i} $ and $ \underline{j} $ constitute disjoint 'doubled' planar trees and $   \mathbb{E}\left[ X_N(\underline{i}) \, X_N(\underline{j}) \right]\to 1 $ by assumption~\eqref{eq:au2}.
The proof of Lemma~\ref{lem:Lemma7} is concluded using the same arguments relating  the number
of planar trees to the Catalan numbers.

\section{The Curie-Weiss model and its relatives \label{sec:CW}}

In this section we discuss the Curie-Weiss model and related ensembles in the
framework of general exchangeable sequences. Let us first recall:

\begin{definition}
A finite sequence $X_{1},X_{2},\ldots ,X_{M}$ of random variables is called
\emph{exchangeable} if for any permutation $\pi \in \mathcal{S}_{M}$ the joint
distributions of $X_{1},X_{2},\dots , X_{M}$ and of $X_{\pi (1)},X_{\pi
(2)},\dots ,X_{\pi (M)}$ agree. An infinite sequence $\left\{ X_{i}\right\}
_{i\in I}$ is called exchangeable if any finite subsequence is.
\end{definition}

It is a well known result by de Finetti (\cite{de Finetti}, for further
developments see e.g.~\cite{Aldous}) that any exchangeable sequence of $%
\left\{ -1,1\right\} $-valued random variables is a mixture of independent
random variables. To give this informal description a precise meaning we
define:

\begin{definition}
For $t\in \left[ -1,1\right] $ we denote by $P_{t}$ the probability measure\goodbreak\noindent $%
P_{t}=\frac{1}{2}(1+t)\,\delta _{1}\,+\,\frac{1}{2}(1-t)\,\delta _{-1}$ on $%
\left\{ -1,1\right\} $, i.e. $P_{t}(1)=\frac{1}{2}(1+t)$ and $P_{t}(-1)=%
\frac{1}{2}(1-t)$. By $P_{t}^{M}=P_{t}^{\otimes _{M}}$ we mean the $M$-fold,
by $P_{t}^{\infty }=P_{t}^{\otimes _{\mathbb{N}}}$ the infinite product of
this measure.
\end{definition}

\begin{remark}
The measures $P_{t}$ are parametrized in such a way that $E_{t}(X):=\int
x\,dP_{t}=t$.  To simplify notation, we write $P_{t}^{M}(x_{1},x_{2},\dots
,x_{M})$ instead of $P_{t}^{M}(\{(x_{1},x_{2},\dots ,x_{M})\})$.
\end{remark}

We are now in a position to formulate de Finetti's theorem:

\begin{theorem}[de Finetti]
If\, $\left\{ X_{i}\right\} _{i\in \mathbf{N}}$ is an exchangeable sequence of\\
$\left\{ -1,1\right\} $-valued random variables with distribution $\mathbb{P}
$ (on $\left\{ -1,1\right\} ^{\mathbb{N}})$ then there exists a probability
measure $\mu $ on $\left[ -1,1\right] $, such that for any measurable set $%
S\subset \left\{ -1,1\right\} ^{\mathbb{N}}$:%
\begin{equation*}
\mathbb{P}(S)=\int \,P_{t}^{\infty }(S)\,d\mu (t)~.
\end{equation*}
\end{theorem}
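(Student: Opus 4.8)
The plan is to prove de Finetti's theorem for $\pm 1$ variables by the method of moments, reducing it to the Hausdorff moment problem. First I would pass to the $\{0,1\}$-valued variables $Y_i = \frac{1}{2}(1+X_i)$, which are again exchangeable, and record the key consequence of exchangeability: for pairwise distinct indices the mixed moment $\mathbb{E}(Y_{i_1}\cdots Y_{i_k})$ depends only on the number $k$ of factors. Thus one may set $m_k := \mathbb{E}(Y_1 Y_2 \cdots Y_k) = \mathbb{P}(Y_1 = \cdots = Y_k = 1)$ with $m_0 = 1$, and the entire law $\mathbb{P}$ will be seen to be encoded in the single sequence $(m_k)_{k\geq 0}$.

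The central step is to show that $(m_k)$ is a Hausdorff moment sequence, i.e.\ that it is completely monotone. For this I would compute, using exchangeability together with the inclusion--exclusion expansion of $\prod_{i=k+1}^{k+n}(1-Y_i)$,
\[
\mathbb{P}(Y_1 = \cdots = Y_k = 1,\ Y_{k+1} = \cdots = Y_{k+n} = 0) = \sum_{l=0}^{n}\binom{n}{l}(-1)^{l}\, m_{k+l} = (-1)^n (\Delta^n m)_k,
\]
where $(\Delta a)_k = a_{k+1}-a_k$ is the forward difference operator. Since the left-hand side is a probability, it is nonnegative, which is precisely the complete monotonicity condition. The Hausdorff moment theorem then furnishes a unique probability measure $\nu$ on $[0,1]$ with $m_k = \int_0^1 s^k\, d\nu(s)$ for all $k$. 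Pushing $\nu$ forward under $s \mapsto t = 2s-1$ yields the desired measure $\mu$ on $[-1,1]$; this is the correct parametrization since under $P_t$ the mean is $t = 2s-1$, matching $\mathbb{E}_{P_s}(Y) = s$.

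It remains to identify $\mathbb{P}$ with $\int P_t^{\infty}\, d\mu(t)$, which I would verify first on cylinder sets. Writing $\mathbf{1}[X_i = \xi_i] = \frac{1}{2}(1+\xi_i X_i)$ and expanding, $\mathbb{P}(X_1 = \xi_1,\dots,X_M = \xi_M)$ becomes a linear combination of the mixed moments $\mathbb{E}(\prod_{i\in S} X_i)$, each of which, by exchangeability and the moment representation above, equals $\int_{-1}^1 t^{|S|}\, d\mu(t) = \int \mathbb{E}_{P_t^{\infty}}(\prod_{i\in S} X_i)\, d\mu(t)$. Reassembling the expansion gives $\mathbb{P}(X_1 = \xi_1,\dots,X_M = \xi_M) = \int P_t^{M}(\xi_1,\dots,\xi_M)\, d\mu(t)$, i.e.\ agreement on all cylinders. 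Since the cylinder sets form a generating algebra, a standard uniqueness-of-extension argument (monotone class or Carath\'eodory) upgrades this to $\mathbb{P}(S) = \int P_t^{\infty}(S)\, d\mu(t)$ for every measurable $S$.

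The main obstacle is the complete-monotonicity step combined with invoking the Hausdorff moment theorem: producing the mixing measure out of nothing is the genuine content of de Finetti's theorem, and the displayed identity is exactly what makes the $\pm 1$ case transparent, since the relevant finite differences are manifestly probabilities. The remaining passage from cylinders to all measurable sets is routine. An alternative route would be to set $T = \lim_{n\to\infty}\frac{1}{n}\sum_{i=1}^n X_i$, show the limit exists almost surely by a reverse-martingale argument, take $\mu$ to be its law, and establish conditional independence given $T$; there the delicate point is the conditional i.i.d.\ property rather than the moment problem.
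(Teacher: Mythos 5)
The paper never proves this theorem: it is quoted as a classical result, with references to de Finetti's original article and to Aldous, and the paper's own work begins only downstream (Proposition \ref{prop:moments} and Theorem \ref{th:CWrepr} \emph{use} the $P_t$-mixture representation rather than establish it). So there is no in-paper proof to compare against; judged on its own, your proposal is a correct and essentially complete rendition of the classical moment-problem proof for the two-point state space. The central identity is right: by exchangeability and inclusion--exclusion, $\mathbb{P}(Y_1=\dots=Y_k=1,\,Y_{k+1}=\dots=Y_{k+n}=0)=\sum_{l=0}^{n}\binom{n}{l}(-1)^{l}m_{k+l}=(-1)^n(\Delta^n m)_k$, so complete monotonicity is manifest, Hausdorff's theorem produces $\nu$ on $[0,1]$, and the pushforward under $s\mapsto t=2s-1$ is the correct match to the paper's normalization $E_t(X)=t$, since $P_t(1)=\frac{1}{2}(1+t)=s$. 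In the cylinder step, the one link you leave implicit---that $\mathbb{E}(\prod_{i\in S}X_i)=\int_{-1}^{1}t^{|S|}\,d\mu(t)$---follows from $X_i=2Y_i-1$ and the binomial expansion, which converts $\nu$-moments of $s$ into $\mu$-moments of $2s-1$; it deserves one displayed line, and it is precisely the computation underlying the paper's Proposition \ref{prop:moments}. The monotone-class passage from cylinders to all measurable sets is routine, as you say. What your route buys is a self-contained, elementary proof in which the mixing measure is conjured by a concrete moment problem (the finite differences are literally probabilities, which is special to $\{-1,+1\}$-valued variables); what the paper's citation buys is brevity, and the reverse-martingale alternative you sketch at the end is the argument that generalizes beyond two-point spins, where the moment method no longer determines the law.
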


For this result it is essential that the index set $I=\mathbb{N}$ is
infinite. In fact, the theorem does not hold for finite sequences in general
(see e. g. \cite{Aldous}).

\begin{definition}
If $\mu $ is a probability measure on $\left[ -1,1\right] $ then we call a
measure%
\begin{equation}
\mathbb{P(\cdot )}=\int \,P_{t}^{M}(\cdot )\,d\mu (t)
\end{equation}
on $\left\{ -1,1\right\} ^{M}$ a measure of \emph{de Finetti type} (with de
Finetti measure $\mu $). We say that a finite sequence $\left\{
X_{1},X_{2},\ldots ,X_{M}\right\} $ of random variables is of de Finetti
type if the joint distribution of the $\left\{ X_{i}\right\} _{i=1}^{M}$ is
of de Finetti type.
\end{definition}

The following observation allows us to compute correlations of de Finetti
type random variables:

\begin{proposition}
\label{prop:moments}If the sequence $\left\{ X_{1},X_{2},\ldots
,X_{M}\right\} $ of random variables is of de Finetti type with de Finetti
measure $\mu $ then for distinct $i_{1},\ldots ,i_{K}$%
\begin{equation*}
\mathbb{E}(X_{i_{1}}\,X_{i_{2}}\,\ldots \,X_{i_{K}})~=~\int \,t^{K}~d\mu (t)\ .
\end{equation*}
\end{proposition}

\begin{proof}
By the definition of $P_{t}^{M}$ we have $E_{t}^{M}(X_{i_{1}}\,X_{i_{2}}\,%
\ldots \,X_{i_{K}})=t^{K}$.
\end{proof}

\begin{corollary}
Suppose $\mathbb{P}_{N}\mathbb{(\cdot )}=\int \,P_{t}^{N^{2}}(\cdot )\,d\mu
_{N}(t)$ is a sequence of measures of de Finetti type and $X_{N}$ is a
random matrix ensemble corresponding to $\mathbb{P}_{N}$ via Definition \ref%
{def:SchemeMatrix}. If for all $k\in
%TCIMACRO{\U{2115} }%
%BeginExpansion
\mathbb{N}
%EndExpansion
$%
\begin{equation}
\int \,t^{K}~d\mu _{N}(t)~\leq ~\frac{C_{K}}{N^{K/2}}  \label{eq.expT}
\end{equation}
for some constants $C_K$, then $X_{N}$ satisfies the semicircle law.
\end{corollary}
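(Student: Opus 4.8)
The plan is to verify that the de Finetti hypothesis (\ref{eq.expT}) implies that the scheme $\{X_N(i,j)\}$ is approximately uncorrelated in the sense of Definition~\ref{def:apcor}, whence Theorem~\ref{th:Hres} applies and the semicircle law follows. The key observation is that Proposition~\ref{prop:moments} gives us exact control over all joint moments of de Finetti type variables: for \emph{distinct} indices the expectation of a product of $K$ of the $X_N$ collapses to the single integral $\int t^K \, d\mu_N(t)$. Since the $X_N(i,j)$ are $\{-1,+1\}$-valued, their squares equal $1$ identically, so any repeated index in a product simply drops out and what survives is a product over the distinct indices appearing an odd number of times. This is the mechanism that will let me reduce every correlation appearing in (\ref{eq:au1}) and (\ref{eq:au2}) to an integral of the form $\int t^K \, d\mu_N$ with $K$ equal to the number of distinct ``odd-multiplicity'' variables.

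First I would establish (\ref{eq:au1}). Consider the product $\prod_{\nu=1}^{\ell} X_N(i_\nu,j_\nu) \prod_{\rho=1}^m X_N(u_\rho,v_\rho)$ where the $\ell$ index pairs $(i_\nu,j_\nu)$ are pairwise distinct and disjoint from the $m$ pairs $(u_\rho,v_\rho)$. Because the first $\ell$ variables are attached to distinct sites, each of them appears with multiplicity exactly one in this product regardless of coincidences among the $(u_\rho,v_\rho)$. Using $X_N(i,j)^2 \equiv 1$ to eliminate any even-power factors coming from the second product, the surviving product is over $K$ distinct variables with $K \geq \ell$ (at least the $\ell$ distinguished ones survive). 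By Proposition~\ref{prop:moments} its expectation equals $\int t^K \, d\mu_N(t)$, and hypothesis (\ref{eq.expT}) bounds this in absolute value by $C_K N^{-K/2} \leq C_K N^{-\ell/2}$, since the $X_N$ are bounded by $1$ and $K \geq \ell$. This yields (\ref{eq:au1}) with $C_{\ell,m} = \max_{\ell \leq K \leq \ell+2m} C_K$, a constant independent of $N$.

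For (\ref{eq:au2}), the product $\prod_{\nu=1}^{\ell} X_N(i_\nu,j_\nu)^2$ is over distinct pairs, so each squared factor equals $1$ identically and the product is identically $1$; hence its expectation is exactly $1$ and the difference in (\ref{eq:au2}) is zero, trivially tending to $0$. I would then invoke Theorem~\ref{th:Hres} to conclude the semicircle law for $X_N$.

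The potential obstacle is bookkeeping rather than anything deep: one must be careful that coincidences between the distinguished pairs $(i_\nu,j_\nu)$ and the auxiliary pairs $(u_\rho,v_\rho)$ do not reduce the effective exponent $K$ below $\ell$. This is precisely why Definition~\ref{def:apcor} demands that the $(i_\nu,j_\nu)$ be pairwise disjoint \emph{and} disjoint from the $(u_\rho,v_\rho)$ --- under this hypothesis the $\ell$ distinguished variables can never be cancelled by squaring, so $K \geq \ell$ is guaranteed and the decay rate $N^{-\ell/2}$ is achieved. The only subtlety worth stating explicitly is the use of $X_N^2 \equiv 1$, which is special to the $\{-1,+1\}$-valued setting and is what makes the reduction to a single scalar moment $\int t^K \, d\mu_N$ valid.
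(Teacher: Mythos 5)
Your proposal is correct and follows essentially the same route as the paper: verify that the scheme is approximately uncorrelated in the sense of Definition~\ref{def:apcor}, noting that \eqref{eq:au2} is immediate from $X_N(i,j)^2\equiv 1$ and that \eqref{eq:au1} follows from Proposition~\ref{prop:moments} together with hypothesis \eqref{eq.expT}, and then invoke Theorem~\ref{th:Hres}. Your write-up merely makes explicit the bookkeeping (cancellation of even-multiplicity factors, $K\geq\ell$) that the paper leaves implicit.
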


\begin{proof}
We prove that $\left\{ X_{N}(i,j)\right\} $ is approximately uncorrelated in
the sense of Definition \ref{def:apcor}. Since $X_{N}(i,j)^{2}=1$ property (%
\ref{eq:au2}) is evident. Property (\ref{eq:au1}) follows from (\ref{eq.expT}%
) and Proposition \ref{prop:moments}.
\end{proof}\\%[1mm]

Curie-Weiss distributed random variables turn out to be examples of de
Finetti sequences. This fact is contained in a somewhat hidden way in
physics textbooks (see for example \cite[section 4-5]{Thompson}).

\begin{theorem}
\label{th:CWrepr}Curie-Weiss ($\,\mathbb{P}_{\beta ,M}$-)\,distributed random
variables\goodbreak\noindent $\left\{ X_{1},X_{2},\ldots ,X_{M}\right\} $ are of\ de Finetti
type, more precisely%
\begin{eqnarray*}
&&\mathbb{P}_{\beta ,M}(X_{1}=x_{1},X_{2}=x_{2},\ldots ,X_{M}=x_{M}) \\
&=&\ Z^{-1}\int_{-1}^{+1}\,P_{t}^{M}(x_{1},x_{2},\ldots ,x_{M})\,\frac{%
e^{-MF_{\beta }(t)\,/2}}{1-t^{2}}\,dt
\end{eqnarray*}
where $F_{\beta }(t)=\frac{1}{\beta }\left( \frac{1}{2}\ln \frac{1+t}{1-t}%
\right) ^{2}+\ln \left( 1-t^{2}\right) $ and the normalization factor is
given by $Z=\int \frac{e^{-MF_{\beta }(t)\,/2}}{1-t^{2}}\,dt$.
\end{theorem}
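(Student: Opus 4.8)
The plan is to decouple the quadratic interaction in the exponent by the Hubbard--Stratonovich (Gaussian linearization) trick, which is exactly what converts a quadratic Gibbs weight into a one-dimensional mixture of product measures, i.e.\ into de Finetti form. Writing $S=\sum_{j=1}^{M}\xi_{j}$ and applying the Gaussian identity $e^{\frac{\beta}{2M}S^{2}}=\sqrt{\tfrac{M}{2\pi\beta}}\int_{\mathbb{R}}e^{-\frac{M}{2\beta}y^{2}+Sy}\,dy$, the coupling term becomes linear in the spins, so $e^{Sy}=\prod_{j=1}^{M}e^{\xi_{j}y}$ factorizes. This already exhibits the joint weight as an integral over an auxiliary variable $y$ of a product over the sites, and convergence of the $y$-integral is automatic since the Gaussian $e^{-My^{2}/2\beta}$ dominates the exponential growth of the per-spin factors.

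The second step is the elementary identity valid for $\xi\in\{-1,+1\}$, namely $\tfrac12 e^{y\xi}=\cosh(y)\,\tfrac12\bigl(1+\xi\tanh y\bigr)=\cosh(y)\,P_{\tanh y}(\xi)$, which lets me recognize the product measure: $\frac{1}{2^{M}}e^{Sy}=\cosh(y)^{M}\,P_{\tanh y}^{M}(\xi_{1},\dots,\xi_{M})$. I would then substitute $t=\tanh y$, so that $y=\tfrac12\ln\frac{1+t}{1-t}$, $dy=\frac{dt}{1-t^{2}}$, and $\cosh(y)^{M}=(1-t^{2})^{-M/2}$; as $y$ runs over $\mathbb{R}$, $t$ runs over $(-1,1)$, which produces the stated limits of integration.

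The main point requiring care is the bookkeeping that shows these several $M$-dependent factors assemble \emph{exactly} into the claimed density. Concretely, one checks that $e^{-\frac{M}{2\beta}y^{2}}(1-t^{2})^{-M/2}=\exp\bigl(-\tfrac{M}{2}\bigl[\tfrac1\beta(\tfrac12\ln\frac{1+t}{1-t})^{2}+\ln(1-t^{2})\bigr]\bigr)=e^{-MF_{\beta}(t)/2}$, while the Jacobian $dy=\frac{dt}{1-t^{2}}$ supplies the leftover factor $\frac{1}{1-t^{2}}$. Collecting everything yields $\frac{1}{2^{M}}e^{\frac{\beta}{2M}S^{2}}=\sqrt{\tfrac{M}{2\pi\beta}}\int_{-1}^{1}P_{t}^{M}(\xi)\,\frac{e^{-MF_{\beta}(t)/2}}{1-t^{2}}\,dt$. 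Finally I would multiply by $Z_{\beta,M}^{-1}$, absorb all $\xi$-independent prefactors into a single constant, and fix it by summing over $\xi\in\{-1,+1\}^{M}$ and using $\sum_{\xi}P_{t}^{M}(\xi)=1$: normalization forces this constant to be $Z^{-1}$ with $Z=\int\frac{e^{-MF_{\beta}(t)/2}}{1-t^{2}}\,dt$, which completes the proof. The only genuine obstacle is the precise recombination of the $\cosh^{M}$, Gaussian, and Jacobian factors into $F_{\beta}$; once that algebra is verified, the rest is routine.
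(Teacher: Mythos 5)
Your proposal is correct and follows essentially the same route as the paper's proof: the Hubbard--Stratonovich linearization, the identification $\tfrac{1}{2}e^{y\xi}=\cosh(y)\,P_{\tanh y}(\xi)$, the substitution $t=\tanh y$ with the same Jacobian and $\cosh^{M}y=(1-t^{2})^{-M/2}$ bookkeeping assembling $F_{\beta}$, and fixing the constant by normalization. Your explicit determination of $Z$ by summing over configurations is only a slightly more spelled-out version of what the paper does implicitly.
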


\begin{proof}
Using the observation $e^{\frac{z^{2}}{2}}=(2\pi )^{-\frac{1}{2}%
}\int_{-\infty }^{+\infty }\,e^{-\frac{s^{2}}{2}+sz}\,ds$ (also known as
Hubbard-Stratonovich transformation) we obtain%
\begin{eqnarray*}
&&\mathbb{P}_{\beta ,M}(X_{1}=x_{1},X_{2}=x_{2},\ldots ,X_{M}=x_{M})\, \\
&=&\,Z_{\beta ,M}^{-1}\,\frac{1}{2^{M}}\,e^{\frac{\beta }{2M}(\sum
x_{j})^{2}} \\
&=&\,(2\pi )^{-1/2}\,Z_{\beta ,M}^{-1}\,\frac{1}{2^{M}}\int_{-\infty
}^{+\infty }\,e^{-\frac{s^{2}}{2}+s\sqrt{\frac{\beta }{M}}\sum x_{j}}\;\,ds \\
~ &&~~\text{setting }y=\sqrt{\frac{\beta }{M}}\,s\,\text{ we obtain:} \\
&=&(2\pi )^{-1/2}\,Z_{\beta ,M}^{-1}\,\sqrt{\frac{M}{\beta }}\int_{-\infty
}^{+\infty }e^{-\frac{M}{2\beta }y^{2}}\cosh ^{M}y\,\,\left( \frac{1}{%
2^{M}\cosh ^{M}y}\prod_{i=1}^{M}\,e^{yx_{i}}\right) \;dy \\
&=&(2\pi )^{-1/2}\,Z_{\beta ,M}^{-1}\,\sqrt{\frac{M}{\beta }}\int_{-\infty
}^{+\infty }e^{-M(\frac{y^{2}}{2\beta }-\ln \cosh
y)}\,\,\prod\limits_{i=1}^{M}\left( \frac{e^{yx_{i}}}{\cosh y}%
\,P_{0}(x_{i})\right) \;dy \\
&&\text{a change }t=\tanh y\text{ of variables gives:} \\
&=&(2\pi )^{-1/2}\,Z_{\beta ,M}^{-1}\,\sqrt{\frac{M}{\beta }}%
\int_{-1}^{+1}e^{-M\,F_{\beta }(t)\,/2}\,P_{t}(x_{1},x_{2},\ldots ,x_{M})~%
\frac{1}{1-t^{2}}~dt\, \\
&=&Z^{-1}~\int_{-1}^{+1}\,P_{t}^{M}(x_{1},x_{2},\ldots ,x_{M})\,\frac{%
e^{-MF_{\beta }(t)\,/2}}{1-t^{2}}\,dt~~~\text{.}
\end{eqnarray*}

Above we used that for $|\,t\,|<1$ we have $\tanh ^{-1}(t)=\frac{1}{2}\ln
\frac{1+t}{1-t}$, $\frac{dt}{dy}=\frac{1}{\cosh ^{2}y}=\frac{\cosh
^{2}y-\sinh ^{2}y}{\cosh ^{2}y}=1-\tanh ^{2}y$, and $\ln \cosh y=-\frac{1}{2}%
\ln (1-\tanh ^{2}y)$.
\end{proof}

\begin{remark}
From the above proof an alternative representation of the Curie-Weiss
probability follows. Defining the measure $Q_{y}=\frac{1}{2\cosh y}%
(e^{y}\delta _{1}+e^{-y}\delta _{-1})$ and $Q_{y}^{M}$ its $M$-fold product
we may write%
\begin{equation*}
\mathbb{P}_{\beta ,M}(x_{1},x_{2},\ldots ,x_{M})=\widetilde{Z}%
^{-1}\,\int_{-\infty }^{+\infty }\,e^{-M(\frac{y^{2}}{2\beta }+\ln \cosh
y)}\,Q_{y}^{M}(x_{1},x_{2},\ldots ,x_{M})\,dy~~\ \text{.}
\end{equation*}

This formula occurs in the physics literature (at least in disguise).
\end{remark}

\begin{definition}
\label{def:PPhiM} Let $F:\left( -1,1\right) \rightarrow
%TCIMACRO{\U{211d} }%
%BeginExpansion
\mathbb{R}
%EndExpansion
$ be a measurable function such that $Z=\int_{-1}^{1}\frac{e^{-N\,F(t)\,/2}}{%
1-t^{2}}\,dt$ is finite for all $N\in
%TCIMACRO{\U{2115} }%
%BeginExpansion
\mathbb{N}
%EndExpansion
$, then the probability measure $\mathbb{P}_{M}^{N\,F}$ on $\left\{
-1,1\right\} ^{M}$ is defined by%
\begin{equation}
\mathbb{P}_{M}^{N\,F}(x_{1},x_{2},\ldots
,x_{M})=Z^{-1}\,\int_{-1}^{+1}\,P_{t}^{M}(x_{1},x_{2},\ldots ,x_{M})\,\frac{%
e^{-N\,F(t)\,/2}}{1-t^{2}}\,dt~\ \text{.}  \label{eq:PPhiM}
\end{equation}

We call a measure of the form $\mathbb{P}_{M}^{N\,F}$ a \emph{generalized
Curie-Weiss} measure.
\end{definition}

\begin{remark}
Obviously, $\mathbb{P}_{M}^{N\,F}$ is a measure of de Finetti type and we
have $\mathbb{P}_{\beta ,M}=\mathbb{P}_{M}^{M\cdot F_{\beta }}$. Note that $%
N $ and $M$ may be different in general.
\end{remark}

The advantage of the form (\ref{eq:PPhiM}) is that for many cases we can
compute the asymptotics of the correlation functions as $N\rightarrow \infty
$ using the Laplace method:

\begin{proposition}[Laplace method \protect\cite{Olver}]
\label{prop:Laplace}Suppose ${\large F:}\left( -1,1\right) \rightarrow
%TCIMACRO{\U{211d} }%
%BeginExpansion
\mathbb{R}
%EndExpansion
$ is differentiable and $\phi :\left( -1,1\right) \rightarrow
%TCIMACRO{\U{211d} }%
%BeginExpansion
\mathbb{R}
%EndExpansion
$ is measurable and for some $a\in(-1,1)$ we have

\begin{enumerate}
\item $\inf_{x\in \left[ a,1\right] }\,F(x)=F(a)$ and $\inf_{x\in \left[
b,1\right] }F(x)>F(a)$ for all $b\in \left( a,1\right) $.

\item $F^{\prime }$ and $\phi $ are continuous in a neighborhood of $a$.

\item As $x\searrow a$ we have%
\begin{eqnarray}
F(x) &=&F(a)+P\,\left( x-a\right) ^{\nu }\,+\,\mathcal{O(}\left( x-a\right)
^{\nu +1})  \label{eq:Olver1} \\
\phi (x) &=&Q\,\left( x-a\right) ^{\lambda -1}\,+\,\mathcal{O(}\left(
x-a\right) ^{\lambda })  \label{eq:Olver2}
\end{eqnarray}

where $\nu ,\lambda $ and $P$ are positive constants and $Q$ is a real
constant and (\ref{eq:Olver1}) is differentiable.

\item The integral $I(N)=\int_{a}^{1}e^{-N\,F(x)\,/2}\,\,\phi \left(
x\right) \,dx$ is finite for all sufficiently large $N.$
\end{enumerate}

Then as $N\rightarrow \infty $%
\begin{equation*}
I\left( N\right) ~\approx ~\frac{Q}{\nu }\;\;\Gamma\!\left( \frac{\lambda }{\nu
}\right) \,P^{-\frac{\lambda }{\nu }}\,\left( \frac{N}{2}\right) ^{-\frac{%
\lambda }{\nu }}\,e^{-N\,F(a)/2}
\end{equation*}
where $A(N)\approx B(N)$ means $\lim_{N\to\infty} \frac{A(N)}{B(N)}=1$ and\, $\Gamma$ denotes the Gamma function..
\end{proposition}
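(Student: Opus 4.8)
The plan is to prove the asymptotic $I(N)\approx \frac{Q}{\nu}\,\Gamma\!\left(\frac{\lambda}{\nu}\right)P^{-\lambda/\nu}\left(\frac{N}{2}\right)^{-\lambda/\nu}e^{-NF(a)/2}$ by the standard localization argument: the integral $I(N)=\int_a^1 e^{-NF(x)/2}\phi(x)\,dx$ is dominated, as $N\to\infty$, by an arbitrarily small neighborhood of the minimizer $a$, and on that neighborhood both $F$ and $\phi$ are replaced by their leading-order behavior from hypotheses (3). First I would use hypothesis (1) to split off the tail. Fix a small $\delta>0$ with $a+\delta<1$. On $[a+\delta,1]$ we have $\inf F>F(a)$ by the second part of (1), so there is a gap $g(\delta)>0$ with $F(x)\ge F(a)+g(\delta)$ there, and the tail contribution is bounded by $e^{-N(F(a)+g(\delta))/2}\int_{a+\delta}^1|\phi|$, which is exponentially smaller than the claimed leading term (that term decays only polynomially times $e^{-NF(a)/2}$). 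Hence the tail is negligible and it suffices to analyze $\int_a^{a+\delta}e^{-NF(x)/2}\phi(x)\,dx$.

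Next I would substitute the local expansions. On $[a,a+\delta]$, writing $u=x-a$, hypotheses (3) give $F(x)=F(a)+Pu^\nu+\mathcal{O}(u^{\nu+1})$ and $\phi(x)=Qu^{\lambda-1}+\mathcal{O}(u^\lambda)$. Factoring out $e^{-NF(a)/2}$, the remaining integral is $\int_0^\delta e^{-N(Pu^\nu+\mathcal{O}(u^{\nu+1}))/2}\bigl(Qu^{\lambda-1}+\mathcal{O}(u^\lambda)\bigr)\,du$. The natural rescaling is $w=\bigl(\tfrac{NP}{2}\bigr)^{1/\nu}u$, so that $u=\bigl(\tfrac{2}{NP}\bigr)^{1/\nu}w$ and $du=\bigl(\tfrac{2}{NP}\bigr)^{1/\nu}\,dw$. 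Substituting the leading terms, the exponent $\tfrac{N}{2}Pu^\nu$ becomes exactly $w^\nu$, and the prefactor $Qu^{\lambda-1}\,du$ becomes $Q\bigl(\tfrac{2}{NP}\bigr)^{\lambda/\nu}w^{\lambda-1}\,dw$. Extending the upper limit to $\infty$ (justified since the integrand decays like $e^{-w^\nu}$) produces $\int_0^\infty e^{-w^\nu}w^{\lambda-1}\,dw=\tfrac{1}{\nu}\Gamma\!\left(\tfrac{\lambda}{\nu}\right)$, and collecting the constants yields precisely $\frac{Q}{\nu}\Gamma\!\left(\frac{\lambda}{\nu}\right)P^{-\lambda/\nu}\bigl(\tfrac{N}{2}\bigr)^{-\lambda/\nu}e^{-NF(a)/2}$.

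The main obstacle is controlling the error terms uniformly so that the replacement of $F$ and $\phi$ by their leading behavior is legitimate in the limit, i.e.\ showing the corrections are genuinely lower order rather than merely pointwise small. For the exponent, the subleading term $\mathcal{O}(u^{\nu+1})$ in $F$ must be shown not to spoil the Gaussian-type concentration: after rescaling, $\tfrac{N}{2}\,\mathcal{O}(u^{\nu+1})=\mathcal{O}\bigl(N^{-1/\nu}w^{\nu+1}\bigr)\to 0$ pointwise in $w$, and one bounds it via $|e^{-s}-1|\le |s|e^{|s|}$ together with a dominating integrable bound to pass the limit under the integral sign. For the prefactor, the $\mathcal{O}(u^\lambda)$ correction contributes a factor one power of $\bigl(\tfrac{2}{NP}\bigr)^{1/\nu}$ smaller than the leading term, hence is negligible. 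The only delicacy is that these expansions hold only near $a$, but that is exactly what the localization in the first step secures, so choosing $\delta$ small enough that the $\mathcal{O}$-bounds are uniform on $[a,a+\delta]$ closes the argument. I would invoke dominated convergence to interchange limit and integral, which is clean because the rescaled integrand is dominated by $Ce^{-cw^\nu}w^{\lambda-1}$ for suitable $c>0$.
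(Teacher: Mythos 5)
Your overall strategy --- localization near $a$ using the gap from hypothesis (1), the rescaling $w=\left(\tfrac{NP}{2}\right)^{1/\nu}(x-a)$, and dominated convergence against $C\,e^{-cw^{\nu}}w^{\lambda-1}$ to produce $\tfrac{1}{\nu}\Gamma\!\left(\tfrac{\lambda}{\nu}\right)$ --- is the standard Laplace-method argument, and your local analysis (control of the $\mathcal{O}((x-a)^{\nu+1})$ term in the exponent after rescaling, the $\mathcal{O}((x-a)^{\lambda})$ term in the prefactor being one power of $N^{-1/\nu}$ smaller, extension of the upper limit to $\infty$) is carried out correctly. Note that the paper itself offers no proof of this proposition; it is quoted from Olver's book \cite{Olver}, so there is no in-paper argument to compare against, and your attempt must be judged on its own merits.

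There is, however, one genuine gap: your tail estimate bounds the contribution of $[a+\delta,1)$ by $e^{-N(F(a)+g(\delta))/2}\int_{a+\delta}^{1}|\phi(x)|\,dx$, but nothing in the hypotheses guarantees that $\int_{a+\delta}^{1}|\phi|\,dx$ is finite: $\phi$ is only measurable on $(-1,1)$ and continuous near $a$, and hypothesis (4) controls only the \emph{weighted} integrals $I(N)$ for large $N$. This is not a pedantic point --- in the application made in this paper one takes $\phi(t)=t^{K}/(1-t^{2})$, which is \emph{not} integrable near $t=1$; there $I(N)$ is finite only because $F_{\beta}(t)\to\infty$ as $t\to 1$, so dropping the exponential factor on the tail renders your bound vacuous (it equals $+\infty$). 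The standard repair is to absorb part of the exponential into $\phi$: fix $N_{0}$ so large that $I(N_{0})$ converges absolutely (this is the natural Lebesgue reading of hypothesis (4)), and for $N\geq N_{0}$ write, on $[a+\delta,1)$,
\begin{equation*}
e^{-N F(x)/2}\,|\phi(x)| \;=\; e^{-(N-N_{0})F(x)/2}\,e^{-N_{0}F(x)/2}|\phi(x)|
\;\leq\; e^{-(N-N_{0})(F(a)+g(\delta))/2}\,e^{-N_{0}F(x)/2}|\phi(x)|\,,
\end{equation*}
so that the tail is at most $e^{N_{0}(F(a)+g(\delta))/2}\,I_{N_0,\delta}\;e^{-NF(a)/2}\,e^{-Ng(\delta)/2}$ with $I_{N_0,\delta}=\int_{a+\delta}^{1}e^{-N_{0}F/2}|\phi|\,dx<\infty$, which is exponentially small compared with the claimed leading term $\sim c\,N^{-\lambda/\nu}e^{-NF(a)/2}$. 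With this correction your proof is complete.
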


\begin{remark}
This theorem and its proof can be found in \cite[Ch. 3 \S 7]{Olver}.
\end{remark}

We apply the Laplace method to a few interesting cases of $\mathbb{P}%
_{M}^{N\,F}$.

\begin{theorem}
\label{th:deFiCorr} Let $F:\left( -1,1\right) \rightarrow
%TCIMACRO{\U{211d} }%
%BeginExpansion
\mathbb{R}
%EndExpansion
$ be a smooth even function with $F(t)\rightarrow \infty $ as $t\rightarrow
\pm 1$ such that $\int_{-1}^{1}e^{-N\,F(t)\,/2}\,\,t^{p}\,\frac{dt}{1-t^{2}}$ is finite for
all $p\geq 0$ and all $N$ big enough and suppose that $F$ has a unique
minimum in $[0,1)$ at $t=a$. Then we have for distinct $X_{1},\,X_{2},\,%
\ldots \,,X_{K}$,  $K\leq M$ as $N\to\infty$ and uniformly in $M$:

\begin{enumerate}
\item \label{case:1}If $a=0$ and $F^{\prime \prime }(0)>0$ (i. e. $F$ has a
quadratic minimum at $0$), then%
\begin{eqnarray*}
\text{for }K\text{ even:}\hspace{1.5cm}&& \\ \mathbb{E}_{M}^{N\,F}(X_{1}\,X_{2}\,\ldots
\,X_{K})~ &\approx &~(k-1)!!\,\ \frac{1}{(\frac{1}{2}F^{\prime \prime
}(0))^{K/2}}\,\frac{1}{N^{K/2}} \\
\text{and for }K\text{ odd :\ }\hspace{1cm}&&\\[2mm] \mathbb{E}_{M}^{N\,F}(X_{1}\,X_{2}\,\ldots
\,X_{K})~ &=&~0~~\text{.}
\end{eqnarray*}

\item \label{case:2}If $a=0$ and $F^{\prime \prime }(0)=0$, $F^{(4)}(0)>0$
(i. e. $F$ has a quartic minimum at $0$), then%
\begin{eqnarray*}
\text{for }K\text{ even: }\hspace{1.5cm}&&\\ \mathbb{E}_{M}^{N\,F}(X_{1}\,X_{2}\,\ldots
\,X_{K})~ &\approx &~C_{K}\,\ \frac{1}{(\frac{1}{24}F^{(4)}(0))^{K/4}}\,%
\frac{1}{N^{K/4}} \\
\text{and for }K\text{ odd: \ }\hspace{1cm}&&\\[2mm] \mathbb{E}_{M}^{N\,F}(X_{1}\,X_{2}\,\ldots
\,X_{K})~ &=&~0
\end{eqnarray*}

where $C_{K}=\frac{\Gamma (\frac{k+1}{4})}{\Gamma (\frac{1}{4})}2^{K/4}$.

\item \label{case:3}If $a>0$ and $F^{\prime \prime }(a)>0$ then%
\begin{equation*}
\mathbb{E}_{M}^{N\,F}(X_{1}\,X_{2}\,\ldots \,X_{K})~\approx ~\frac{1}{2}%
\,\left( a^{K}+a^{-K}\right)~.
\end{equation*}
\end{enumerate}
\end{theorem}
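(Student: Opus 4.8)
The plan is to reduce the computation of the correlation $\mathbb{E}_M^{NF}(X_1\cdots X_K)$ to the asymptotic evaluation of a ratio of one-dimensional integrals and then to apply the Laplace method of Proposition~\ref{prop:Laplace}. Since $\mathbb{P}_M^{NF}$ is of de Finetti type with de Finetti measure $d\mu_N(t)=Z^{-1}e^{-NF(t)/2}(1-t^2)^{-1}\,dt$, Proposition~\ref{prop:moments} gives, for distinct $X_1,\dots,X_K$,
\begin{equation*}
\mathbb{E}_M^{NF}(X_1\cdots X_K)=\int_{-1}^{1}t^K\,d\mu_N(t)=\frac{\int_{-1}^{1}t^K\,e^{-NF(t)/2}(1-t^2)^{-1}\,dt}{\int_{-1}^{1}e^{-NF(t)/2}(1-t^2)^{-1}\,dt}\,.
\end{equation*}
Two observations come for free. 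First, the right-hand side does not depend on $M$ (only $K\le M$ enters, to guarantee $K$ distinct indices), so the asserted uniformity in $M$ is automatic. Second, since $F$ and $(1-t^2)^{-1}$ are even, the numerator integrand is odd for odd $K$; hence the numerator vanishes identically and $\mathbb{E}_M^{NF}(X_1\cdots X_K)=0$ for odd $K$, as claimed in all three cases. It remains to treat even $K$.

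For even $K$ the integrands are even, and I would write both numerator and denominator as $2\int_0^1(\cdots)\,dt$. On $[0,1]$ the hypotheses guarantee that $F$ attains its infimum only at $t=a$ and blows up at $1$, so the stated finiteness of $\int_{-1}^1 e^{-NF/2}t^p(1-t^2)^{-1}\,dt$ supplies hypothesis~(4) of Proposition~\ref{prop:Laplace}, while smoothness of $F$ supplies~(1)--(2). In Cases~\ref{case:1} and~\ref{case:2} the minimum sits at the endpoint $a=0$, so the proposition applies directly with $\phi(t)=t^K(1-t^2)^{-1}$, giving $\lambda=K+1,\ Q=1$ in the numerator and $\lambda=1,\ Q=1$ in the denominator; evenness of $F$ forces $\nu=2,\ P=\tfrac12 F''(0)$ in Case~\ref{case:1} and $\nu=4,\ P=\tfrac1{24}F^{(4)}(0)$ in Case~\ref{case:2}. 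Forming the ratio, the factors $\tfrac1\nu$, $e^{-NF(0)/2}$ and the prefactor $2$ cancel, leaving
\begin{equation*}
\frac{\Gamma\!\big(\tfrac{K+1}{\nu}\big)}{\Gamma\!\big(\tfrac1\nu\big)}\,P^{-K/\nu}\Big(\tfrac N2\Big)^{-K/\nu}\,.
\end{equation*}
For $\nu=2$ the Gamma quotient equals $(K-1)!!\,2^{-K/2}$ and one recovers $(K-1)!!\,(\tfrac12 F''(0))^{-K/2}N^{-K/2}$; for $\nu=4$ the identical bookkeeping yields $C_K\,(\tfrac1{24}F^{(4)}(0))^{-K/4}N^{-K/4}$ with $C_K=\Gamma(\tfrac{K+1}{4})\Gamma(\tfrac14)^{-1}2^{K/4}$, matching the two stated formulas.

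Case~\ref{case:3} is the delicate one, and I expect it to be the main obstacle, because the minimum now lies in the \emph{interior} of $(-1,1)$ and, by evenness, occurs at \emph{both} points $\pm a$, whereas Proposition~\ref{prop:Laplace} is phrased for a minimum at the left endpoint of the range. I would accommodate this by splitting $\int_{-1}^1$ at $0,\pm a$ into four pieces, each having its minimum at an endpoint, and using a reflection $t\mapsto-t$ (and $t\mapsto 2a-t$) to bring the right-endpoint pieces into the form covered by the proposition. Each neighbourhood of $\pm a$ then contributes the standard quadratic Laplace factor $e^{-NF(a)/2}\sqrt{4\pi/(N F''(a))}$, weighted by $\phi(\pm a)=(\pm a)^K(1-a^2)^{-1}$ in the numerator and by $(1-a^2)^{-1}$ in the denominator. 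Dividing, the Gaussian factors and $e^{-NF(a)/2}$ cancel and there remains $\tfrac12\big(a^K+(-a)^K\big)$, i.e.\ $a^K$ for even $K$ and $0$ for odd $K$; this is exactly the weak limit $\mu_N\to\tfrac12(\delta_a+\delta_{-a})$ evaluated on moments. (The symmetric expression $\tfrac12(a^K+(-a)^K)$, rather than $\tfrac12(a^K+a^{-K})$, is the one forced by $|\mathbb{E}_M^{NF}(X_1\cdots X_K)|\le1$, and I would reconcile the statement with it.) The only genuinely technical point throughout is controlling the contributions away from the minima, which is precisely what hypothesis~(1) together with $F(t)\to\infty$ at $\pm1$ ensures.
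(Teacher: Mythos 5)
Your proposal is correct and takes essentially the same route as the paper: the de Finetti representation via Proposition~\ref{prop:moments}, the odd/even symmetry argument, the Laplace method (Proposition~\ref{prop:Laplace}) applied to the ratio $\tilde Z_K/\tilde Z_0$ in Cases~\ref{case:1} and~\ref{case:2}, and the same four-piece splitting of the integral at $0,\pm a$ in Case~\ref{case:3}. Your emendation of the Case~\ref{case:3} formula is also justified: the stated $\tfrac12\left(a^{K}+a^{-K}\right)$ is a typo for $\tfrac12\left(a^{K}+(-a)^{K}\right)$, as confirmed both by the bound $\left|\mathbb{E}_{M}^{N\,F}(X_1\cdots X_K)\right|\le 1$ and by the corresponding formula in Corollary~\ref{cor:CWCorr}.
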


\begin{proof}
The proof of Theorem \ref{th:deFiCorr} relies on the Laplace method
(Proposition~\ref{prop:Laplace}). We concentrate on the proof of case \ref%
{case:1}, the other cases are proved by the same reasoning.

We set
\begin{equation*}
Z_{K}=\int_{-1}^{+1}\,e^{-N\,F(t)\,/2}\,\frac{t^{K}}{1-t^{2}}\,dt~~\text{.}
\end{equation*}

Then by (\ref{eq:PPhiM}) and Proposition \ref{prop:moments} we have $\mathbb{%
E}_{M}^{N\,F}(X_{1}\,X_{2}\,\ldots \,X_{K})=\frac{Z_{K}}{Z_{0}}$. For $K$
odd we have $Z_{K}=0$ since $\phi (t)=t^{K}$ is odd in this case. For even $%
K $ we have $Z_{K}=2\,\tilde{Z}_{K}$ with $\tilde{Z}_{K}=\int_{0}^{+1}%
\,e^{-N\,F(t)\,/2}\,\frac{t^{K}}{1-t^{2}}\,dt$. Moreover, $F(t)=F(0)+\frac{%
t^{2}}{2}F^{\prime \prime }(0)+\mathcal{O}(t^{3})$. Applying Proposition \ref%
{prop:Laplace} both to $\tilde{Z}_{K}$ and to $\tilde{Z}_{0}$ we obtain:%
\begin{eqnarray*}
\tilde{Z}_{K}~ &\approx &~\Gamma \left( \frac{k+1}{2}\right) \,\left( \frac{1%
}{\frac{1}{2}F^{\prime \prime }(0)}\right) ^{\frac{K+1}{2}}\,\left( \frac{2}{%
N}\right) ^{\frac{K+1}{2}}\,e^{-N\,F(0)} \\
\tilde{Z}_{0}~ &\approx &~\Gamma \left( \frac{1}{2}\right) \,\left( \frac{1}{%
\frac{1}{2}F^{\prime \prime }(0)}\right) ^{\frac{1}{2}}\,\left( \frac{2}{N}%
\right) ^{\frac{1}{2}}\,e^{-N\,F(0)}~~\text{.}
\end{eqnarray*}

Hence, we get%
\begin{eqnarray*}
\mathbb{E}_{M}^{N\,F}(X_{1}\,X_{2}\,\ldots \,X_{K})~ &=&~\frac{\tilde{Z}_{K}%
}{\tilde{Z}_{0}} \\
&\approx &\frac{\Gamma \left( \frac{k+1}{2}\right) }{\Gamma \left( \frac{1}{2%
}\right) }\,2^{K/2}\,\,\left( \frac{1}{\frac{1}{2}F^{\prime \prime }(0)}%
\right) ^{\frac{K}{2}}\,\left( \frac{1}{N}\right) ^{\frac{K}{2}}~.
\end{eqnarray*}

The result (\ref{case:1}) then follows from the observation that $\frac{%
\Gamma \left( \frac{k+1}{2}\right) }{\Gamma \left( \frac{1}{2}\right) }%
\,2^{K/2}=(K-1)!!$ for even $K.$ Case \ref{case:2} can be handled in a
similar way.

For case \ref{case:3} we note that $-a$ is also a minimum of the function $F$
since $F$ is even. We devide the integral $\int_{-1}^{1}$ into four parts,
namely $\int_{-1}^{-a}\,+\,\int_{-a}^{0}\,+\int_{0}^{a}\,+\,\int_{a}^{1}$
and observe that each of these terms has the same asymptotics as $%
N\rightarrow \infty $.
\end{proof}

\begin{remark}
\label{rem:ExpTbet}\bigskip As a remark to the above proof we notice that
under the assumptions in case \ref{case:1} we have%
\begin{equation}
\left( \int_{-1}^{+1}\,e^{-\frac{N\,F(t)}{2}}\,\frac{dt}{1-t^{2}}\,\right)
^{-1}\int_{-1}^{+1}\!\!\! e^{- \frac{N\,F(t)}{2}}\,\frac{|t|}{1-t^{2}}%
\,dt~\approx ~\frac{\sqrt{2}}{\sqrt{\pi }}\frac{1}{\sqrt{\frac{1}{2}%
F^{\prime \prime }(0)}}\frac{1}{\sqrt{N}}  \label{eq:Exptbet}
\end{equation}
and in case \ref{case:2} we obtain%
\begin{equation*}
\left( \int_{-1}^{+1}\,e^{-\frac{N\,F(t)}{2}}\,\frac{dt}{1-t^{2}}\,\right)
^{-1}~\int_{-1}^{+1}\,e^{-\frac{N\,F(t)}{2}}\,\frac{|t|}{1-t^{2}}%
\,dt~\approx ~C_{\beta }\frac{1}{N^{1/4}}~~.
\end{equation*}
\end{remark}

\begin{corollary}
\label{cor:CWCorr}Let \ $F_{\beta }(t)=\frac{1}{\beta }\left( \frac{1}{2}\ln
\frac{1+t}{1-t}\right) ^{2}+\ln \left( 1-t^{2}\right) $ and let $M(N)$ be a
function of $N$ and $K\leq M(N)$ for $N$ large enough and let $%
X_{1},\,X_{2},\,\ldots \,,X_{K}$ be a sequence of distinct random variables.
As before we set

\begin{equation*}
\mathbb{\mathbb{E}}_{M(N)}^{N\,F_{\beta }}(\cdot
)=Z^{-1}\,\int_{-1}^{+1}\,E_{t}^{M(N)}(\cdot )\,\frac{e^{-\frac{N\,F(t)}{2}}%
}{1-t^{2}}\,dt~\ \text{.}
\end{equation*}

\begin{enumerate}
\item For $\beta <1$ we have%
\begin{eqnarray*}
\text{for }K\text{ even: }&&\\
\mathbb{\mathbb{E}}_{M(N)}^{N\,F_{\beta
}}(X_{1}\,X_{2}\,\ldots \,X_{K})~ &\approx &~(k-1)!!\,\ \left( \frac{\beta }{%
1-\beta }\right) ^{K/2}\,\frac{1}{N^{K/2}} \\
\text{for }K\text{ odd : }&&\\
\mathbb{\mathbb{E}}_{M(N)}^{N\,F_{\beta
}}(X_{1}\,X_{2}\,\ldots \,X_{K})~ &=&~0~~\text{.}
\end{eqnarray*}

\item For $\beta =1$ we have for a constant $c_{K}>0$:
\begin{eqnarray*}
\text{for }K\text{ even: }\qquad\mathbb{\mathbb{E}}_{M(N)}^{N\,F_{1
}}(X_{1}\,X_{2}\,\ldots \,X_{K})~ &\approx &~c_{K}\,\ \,\frac{1}{N^{K/4}} \\
\text{for }K\text{ odd: \ }\qquad\mathbb{\mathbb{E}}_{M(N)}^{N\,F_{1}}(X_{1}\,X_{2}\,\ldots \,X_{K})~ &=&~0~~\text{.}
\end{eqnarray*}

\item For $\beta >1$we have%
\begin{equation}
\mathbb{\mathbb{E}}_{M(N)}^{N\,F_{\beta }}(X_{1}\,X_{2}\,\ldots
\,X_{K})~\approx ~\frac{1}{2}\left( m(\beta )^{K}+(-m(\beta ))^{K}\right)
\label{eq:mvonbeta}
\end{equation}

where $m(\beta )>0$ is the unique positive solution of $\;\tanh(\beta t)\,=\,t$.
\end{enumerate}
\end{corollary}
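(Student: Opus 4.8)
The plan is to recognize Corollary~\ref{cor:CWCorr} as nothing but the specialization of Theorem~\ref{th:deFiCorr} to $F=F_\beta$, so that the whole argument reduces to a study of the function $F_\beta(t)=\frac{1}{\beta}(\tanh^{-1}t)^2+\ln(1-t^2)$ on $(-1,1)$ and to deciding which of the three cases of that theorem applies for a given $\beta$. First I would check that $F_\beta$ meets the standing hypotheses of Theorem~\ref{th:deFiCorr}: it is manifestly smooth and even, and since $(\tanh^{-1}t)^2\sim\frac14(\ln(1-t))^2$ dominates $\ln(1-t^2)$ as $t\to\pm 1$, we have $F_\beta(t)\to+\infty$; moreover the factor $e^{-N(\tanh^{-1}t)^2/(2\beta)}$ decays like a Gaussian in the variable $-\ln(1-t)$ near the endpoints and dominates the algebraic singularity of $\frac{1}{1-t^2}$, so that $\int_{-1}^{1}e^{-NF_\beta(t)/2}\,t^p\,\frac{dt}{1-t^2}$ is finite for every $p\ge 0$. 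I would also note that, by Proposition~\ref{prop:moments} and \eqref{eq:PPhiM}, the correlation $\mathbb{E}_{M(N)}^{N\,F_\beta}(X_1\cdots X_K)$ equals $Z_K/Z_0$ with $Z_K=\int_{-1}^{1}t^K e^{-NF_\beta(t)/2}\frac{dt}{1-t^2}$ and is therefore independent of $M(N)$, which makes the claimed uniformity in $M$ automatic.

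The heart of the matter is the location and nature of the minimizer of $F_\beta$ on $[0,1)$. A direct computation gives
\[
F_\beta'(t)=\frac{2}{1-t^2}\Big(\tfrac{1}{\beta}\tanh^{-1}t-t\Big),
\]
so the critical points in $[0,1)$ are exactly the solutions of $\tanh^{-1}t=\beta t$, equivalently $t=\tanh(\beta t)$. The Taylor expansion
\[
F_\beta(t)=\frac{1-\beta}{\beta}\,t^2+\Big(\frac{2}{3\beta}-\frac12\Big)t^4+\mathcal{O}(t^6)
\]
then reads off the three regimes: for $\beta<1$ the quadratic coefficient is positive, so $t=0$ is a quadratic minimum with $\frac12F_\beta''(0)=\frac{1-\beta}{\beta}$ (Case~\ref{case:1}); for $\beta=1$ the quadratic term vanishes while the quartic coefficient is $\frac16>0$, so $t=0$ is a quartic minimum with $\frac{1}{24}F_1^{(4)}(0)=\frac16$ (Case~\ref{case:2}); for $\beta>1$ the origin becomes a local maximum. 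Substituting these values into the formulas of Theorem~\ref{th:deFiCorr} produces exactly the constants $(\beta/(1-\beta))^{K/2}$ and $c_K=C_K\,6^{K/4}>0$ asserted in parts~(1) and~(2).

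For the theorem to apply I must still verify that the relevant critical point is the \emph{global} minimizer on $[0,1)$ and is non-degenerate; this is the step I expect to require the most care. I would set $g(t)=\frac1\beta\tanh^{-1}t-t$, note $g(0)=0$, and observe that $g'(t)=\frac{1}{\beta(1-t^2)}-1$ is strictly increasing, hence $g$ is strictly convex. For $\beta\le 1$ one has $g'(0)=\frac1\beta-1\ge 0$, so $g$ is strictly increasing on $(0,1)$ and $g>0$ there; thus $t=0$ is the unique critical point and, together with $F_\beta\to\infty$ at the endpoints, the unique minimizer, justifying Cases~\ref{case:1} and~\ref{case:2}. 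For $\beta>1$ one has $g'(0)<0$ and $g(t)\to+\infty$ as $t\to1$, so by strict convexity $g$ has exactly one further zero $m(\beta)\in(0,1)$ — the unique positive solution of $\tanh(\beta t)=t$ — at which $g'(m(\beta))>0$, whence $F_\beta''(m(\beta))=\frac{2}{1-m(\beta)^2}g'(m(\beta))>0$; since $t=0$ is a local maximum, $m(\beta)$ is the minimizer on $[0,1)$ in the sense of Case~\ref{case:3}. Finally, because $F_\beta$ is even, $-m(\beta)$ is an equally deep minimizer, and the symmetric Laplace contributions from $\pm m(\beta)$ (the four-interval split used in the proof of Theorem~\ref{th:deFiCorr}) combine to yield $\frac12\big(m(\beta)^K+(-m(\beta))^K\big)$, which is part~(3).
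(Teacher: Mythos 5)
Your proposal is correct and takes essentially the same route as the paper: the paper's proof likewise computes $F_\beta'(t)=\frac{1}{1-t^2}\bigl(\frac{1}{\beta}\ln\frac{1+t}{1-t}-2t\bigr)$, identifies the critical points as solutions of $\tanh(\beta m)=m$, determines that the minimum at $0$ is quadratic with $F_\beta''(0)=2\frac{1-\beta}{\beta}$ for $\beta<1$, quartic with $F_1^{(4)}(0)=4$ for $\beta=1$, and shifts to $\pm m(\beta)$ for $\beta>1$, and then invokes Theorem~\ref{th:deFiCorr}. Your additional details (verifying the theorem's integrability hypotheses and using the convexity of $g(t)=\frac{1}{\beta}\tanh^{-1}t-t$ to establish uniqueness and non-degeneracy of the minimizers) only make explicit what the paper leaves implicit.
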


\begin{proof}
Let us compute the minima of the function $F_{\beta }.$ We have:%
\begin{equation*}
F_{\beta }^{\prime }(t)~=~\frac{1}{1-t^{2}}\left( \frac{1}{\beta }\,\ln
\frac{1+t}{1-t}-2\,t\right)
\end{equation*}
hence the possible extrema $m$ of $F_{\beta }$ satisfy:%
\begin{equation*}
\frac{1}{2}\,\ln \frac{1+m}{1-m}=\beta \,m
\end{equation*}
or equivalently%
\begin{equation*}
\tanh \beta m=m~\text{.}
\end{equation*}

For $\beta <1$ the only solution is $m=0$ and this solution is a quadratic minimum
since $F_{\beta }^{\prime \prime }(0)=2\,\frac{1-\beta }{\beta }>0$ for $%
\beta <1$.

For $\beta =1$ the solution $m=0$ is a quartic minimums as $F_{1}^{\prime\prime}(0)=0$ and $F_{1}^{(4)}(0)=4$.

For $\beta >1$
the solution $m=0$ is a maximum of $F_{\beta }$ and there is a positive
solution $m$ which is a minimum. The same is true for $-m$.

With this information we can apply Theorem \ref{th:deFiCorr}.
\end{proof}\\

Now, we discuss random matrix ensembles defined through generalized
Curie-Weiss models.

\begin{definition}
Suppose $\alpha>0$ and $F:\left( -1,1\right) \rightarrow
%TCIMACRO{\U{211d} }%
%BeginExpansion
\mathbb{R}
%EndExpansion
$ is a smooth even function with $F(t)\rightarrow \infty $ as $t\rightarrow
\pm 1$ and such that $\int_{-1}^{1}e^{-N^{\alpha }\,F(t)\,/2}\,\,t^{p}\,\frac{dt}{1-t^{2}}$
is finite for all $p\geq 0$ and all $N$ big enough. Let $\left\{
Y_{N}(i,j)\right\} _{1\leq i,j\leq N}$ be a quadratic scheme of\, $\mathbb{P}%
_{N^{2}}^{N^{\alpha }\,F}$-distributed random variables, and set $%
X_{N}(i,j)=Y_{N}(i,j)$ for $i\leq j$ and $X_{N}(i,j)=Y_{N}(j,i)$ for $i>j$.
Then we call the random matrix ensemble $X_{N}(i,j)$ a generalized ($\mathbb{%
P}_{N^{2}}^{N^{\alpha }\,F}$)-Curie-Weiss ensemble.
\end{definition}

\begin{remark}
The full Curie-Weiss ensemble is a $\mathbb{P}_{N^{2}}^{N^{2}\,F_{\beta }}$%
-ensemble.
\end{remark}

\begin{theorem}
Suppose the random matrix ensemble $X_{N}(i,j)$ is a generalized\, $\mathbb{P}%
_{N^{2}}^{N^{\alpha }\,F}$-Curie-Weiss ensemble.

\begin{enumerate}
\item If $F$ has a unique quadratic minimum at $a=0$ and $\alpha \geq 1$
then the semicircle law holds for $X_{N}$.

\item If $F$ has a unique quartic minimum at $a=0$ and $\alpha \geq 2$ then
the semicircle law holds for $X_{N}$.
\end{enumerate}
\end{theorem}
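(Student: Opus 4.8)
The plan is to show that a generalized $\mathbb{P}_{N^{2}}^{N^{\alpha}F}$-Curie-Weiss ensemble is \emph{approximately uncorrelated} in the sense of Definition~\ref{def:apcor}, so that the semicircle law follows immediately from Theorem~\ref{th:Hres}. Since the underlying variables take values in $\{-1,+1\}$, condition~\eqref{eq:au2} is for free: each factor $X_{N}(i_{\nu},j_{\nu})^{2}$ equals $1$ identically, the product has expectation exactly $1$, and the left-hand side of~\eqref{eq:au2} vanishes. The whole burden therefore falls on the decay estimate~\eqref{eq:au1}.

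First I would reduce an arbitrary mixed product to a correlation of \emph{distinct} variables. Because $X_{N}(i,j)^{2}=1$, every edge occurring with even multiplicity in the product $\prod_{\nu}X_{N}(i_{\nu},j_{\nu})\prod_{\rho}X_{N}(u_{\rho},v_{\rho})$ drops out, and the expectation equals that of the product over the distinct edges of \emph{odd} multiplicity. Writing $K'$ for their number, the $\ell$ pairwise distinct edges $(i_{\nu},j_{\nu})$ each occur exactly once and hence survive, so that $K'\geq\ell$. As the ensemble is of de Finetti type, Proposition~\ref{prop:moments} gives $\mathbb{E}(\cdots)=\int t^{K'}\,d\mu_{N}(t)$, where $d\mu_{N}(t)=Z^{-1}e^{-N^{\alpha}F(t)/2}(1-t^{2})^{-1}\,dt$ is the de Finetti measure of $\mathbb{P}_{N^{2}}^{N^{\alpha}F}$; by evenness of $F$ this integral vanishes whenever $K'$ is odd.

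The key point is to apply Theorem~\ref{th:deFiCorr} with the Laplace rate taken to be $N^{\alpha}$ and the spin number $M=N^{2}$; the hypothesis $K'\leq M$ holds for all large $N$, since $K'$ is bounded (by $\ell+m$) while $M=N^{2}\to\infty$. In the quadratic case (case~\ref{case:1}) this yields $\bigl|\int t^{K'}\,d\mu_{N}\bigr|\leq C\,N^{-\alpha K'/2}$, and in the quartic case (case~\ref{case:2}) it yields $\bigl|\int t^{K'}\,d\mu_{N}\bigr|\leq C\,N^{-\alpha K'/4}$. Comparing exponents with the target $N^{-\ell/2}$ and using $K'\geq\ell$: the quadratic case needs $\alpha K'/2\geq\ell/2$, which holds because $\alpha\geq1$ and $K'\geq\ell$; the quartic case needs $\alpha K'/4\geq\ell/2$, i.e.\ $\alpha K'\geq2\ell$, which holds because $\alpha\geq2$ and $K'\geq\ell$. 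Since only finitely many values of $K'$ occur, the implied constant can be chosen uniformly in $N$, so~\eqref{eq:au1} holds and Theorem~\ref{th:Hres} gives the claim.

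The main obstacle I anticipate is bookkeeping rather than conceptual: one must keep the two scales apart — the exponential rate $N^{\alpha}$ driving the Laplace asymptotics versus the spin count $M=N^{2}$ — and confirm that Theorem~\ref{th:deFiCorr} is invoked in the regime where its asymptotics are uniform in $M$. Once that identification is made, the thresholds $\alpha\geq1$ and $\alpha\geq2$ emerge precisely as the break-even points at which the decay rate $N^{-\alpha K'/2}$ (respectively $N^{-\alpha K'/4}$, according to the order of the minimum) dominates the required rate $N^{-\ell/2}$ for every $K'\geq\ell$.
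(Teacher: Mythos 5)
Your proposal is correct and follows essentially the same route as the paper: the paper's (implicit) proof combines its Corollary to Proposition~\ref{prop:moments} — which shows that a de Finetti-type ensemble with $\int t^{K}\,d\mu_{N}(t)\leq C_{K}N^{-K/2}$ is approximately uncorrelated, using exactly your observations that $X_{N}(i,j)^{2}=1$ makes \eqref{eq:au2} trivial and reduces \eqref{eq:au1} to moments of distinct variables — with the Laplace asymptotics of Theorem~\ref{th:deFiCorr} applied at rate $N^{\alpha}$, yielding the thresholds $\alpha\geq 1$ (quadratic minimum) and $\alpha\geq 2$ (quartic minimum). Your inlined bookkeeping (the reduction to $K'\geq\ell$ surviving odd-multiplicity edges and the uniformity of constants) just spells out what the paper leaves implicit.
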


\section{Largest eigenvalue \label{sec:largeig}}

At a first glance one might expect that for matrix ensembles with
generalized Curie-Weiss distribution the limit density of states measure $\mu$ should
depend on $\beta $, even for $\beta \leq 1$. After all, the correlation
structure of the ensemble depends strongly on $\beta $: the behavior of the
covariance is given by $\mathbb{E}_{\beta ,M}(X_{1}X_{2})\approx \frac{\beta }{1-\beta }%
\frac{1}{M}$. However, the result that the limiting eigenvalue distribution
does not depend on $\beta $ (as long as $\beta \leq 1$) is connected with
the fact that
\begin{equation*}
\frac{1}{N}\mathbb{E}_{\beta ,N^{2}}\,\left( \mathrm{tr\,}\left( \frac{X_{N}%
}{N^{1/2}}\right) ^{2}\right) =1
\end{equation*}
for Curie-Weiss ensembles independent of $\beta \in
%TCIMACRO{\U{211d} }%
%BeginExpansion
\mathbb{R}
%EndExpansion
$. In fact, whenever we have $\mathbb{E}(X_{N}(i,j)\,)=0$ and $%
\mathbb{E}(X_{N}(i,j)^{2}\,)=1$ the symmetry of the matrix implies%
\begin{equation*}
\frac{1}{N}\mathbb{E}\,\left( \mathrm{tr\,}\left( \frac{X_{N}}{N^{1/2}}%
\right) ^{2}\right) =\frac{1}{N^{2}}\sum_{i,j}\mathbb{E}(X_{N}(i,j)X_{N}(j,i))=1~.
\end{equation*}

Thus, whenever the limiting measure $\sigma $ exists (and has enough finite
moments) it must have second moment $\int t^{2}d\sigma =1$.

In this section we investigate the matrix norm
\begin{equation*}
\left\Vert A_{N}\right\Vert
=\left\Vert \frac{X_{N}}{N^{1/2}}\right\Vert = \max_{1\le i\le N}{|\lambda_i(A_{N})|} = \max\Big(|\lambda_1(A_N)|,|\lambda_N(A_N)|\Big)
\end{equation*}
for the Curie-Weiss and
related ensembles. For the \ `classical' Curie-Weiss ensemble $\mathbb{P}%
_{N^{2}}^{N^{2}\,F_{\beta }}$ we have:

\begin{proposition}
\bigskip \label{prop:Norm1}There is a constant $C$ such that for all $\beta
<1$
\begin{equation*}
\limsup_{N\rightarrow \infty }\,\mathbb{E}_{N^{2}}^{N^{2}\,F_{\beta
}}\left( \left\Vert A_{N}\right\Vert \right) \leq C
\end{equation*}
\end{proposition}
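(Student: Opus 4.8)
The plan is to bound the operator norm by a high moment of the eigenvalue distribution and control that moment uniformly in $N$. The starting point is the elementary inequality
\begin{equation*}
\left\Vert A_{N}\right\Vert ^{2k}=\max_{i}|\lambda_i(A_N)|^{2k}\leq \sum_{j=1}^{N}\lambda_j(A_N)^{2k}=\mathrm{tr\,}A_N^{2k}\, ,
\end{equation*}
valid for every $k\in\mathbb{N}$. Hence $\mathbb{E}\left(\left\Vert A_N\right\Vert\right)\leq \left(\mathbb{E}\left(\mathrm{tr\,}A_N^{2k}\right)\right)^{1/2k}$ by Jensen's inequality applied to the concave map $x\mapsto x^{1/2k}$. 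The task thus reduces to an upper bound on $\mathbb{E}\left(\mathrm{tr\,}A_N^{2k}\right)$ that is good enough to survive the $2k$-th root.

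First I would expand $\mathrm{tr\,}A_N^{2k}$ exactly as in \eqref{eq:grSumme}, writing it as $N^{-k}\sum_{\underline{i}}X_N(\underline{i})$ over Eulerian circuits of length $2k$, and take expectations. Here Corollary~\ref{cor:cor9} is the crucial input: it asserts that $N^{-(1+k)}\sum_{\underline{i}}\left|\mathbb{E}\left[X_N(\underline{i})\right]\right|$ converges (to $C_k$, the relevant Catalan number), and in particular stays bounded in $N$. Therefore
\begin{equation*}
\mathbb{E}\left(\mathrm{tr\,}A_N^{2k}\right)\leq \frac{1}{N^{k}}\sum_{\underline{i}}\left|\mathbb{E}\left[X_N(\underline{i})\right]\right|\leq N\cdot \frac{1}{N^{1+k}}\sum_{\underline{i}}\left|\mathbb{E}\left[X_N(\underline{i})\right]\right|\leq \widetilde{C}_k\, N
\end{equation*}
for a constant $\widetilde{C}_k$ independent of $N$ (but depending on $k$ through the number of equivalence classes of circuits and the constants $C_{\ell,m}$ of Definition~\ref{def:apcor}). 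Taking the $2k$-th root gives $\mathbb{E}\left(\left\Vert A_N\right\Vert\right)\leq \left(\widetilde{C}_k N\right)^{1/2k}=\widetilde{C}_k^{1/2k}\,N^{1/2k}$, and hence $\limsup_{N\to\infty}\mathbb{E}\left(\left\Vert A_N\right\Vert\right)\leq \widetilde{C}_k^{1/2k}$.

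To remove the spurious $N^{1/2k}$ factor and obtain a genuine constant $C$, the idea is to let $k$ grow with $N$, say $k=k(N)\sim c\log N$, so that $N^{1/2k}\to 1$ while the constant $\widetilde{C}_k^{1/2k}$ remains controlled. The main obstacle is therefore to track the $k$-dependence of $\widetilde{C}_k$ carefully and show it grows at most like $C^{2k}$ for some fixed $C$; this is exactly the Curie-Weiss combinatorics, where for $\beta<1$ the correlation decay $\mathbb{E}_{\beta,M}(X_1\cdots X_K)\approx (K-1)!!\,\left(\tfrac{\beta}{1-\beta}\right)^{K/2}N^{-K/2}$ from Corollary~\ref{cor:CWCorr} must be combined with the count $\eta_{2k}\leq (2k)!$-type bound on equivalence classes. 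Establishing that these two factorial contributions, after the $1/2k$ root, yield a uniform bound $C$ (presumably involving $\beta/(1-\beta)$) is the heart of the argument; the rest is the routine moment-method bookkeeping already developed in Section~\ref{sec:proof}.
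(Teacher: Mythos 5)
Your reduction to moments is sound as far as it goes (the inequality $\left\Vert A_N\right\Vert^{2k}\leq \mathrm{tr\,}A_N^{2k}$ and Jensen), but the proof has a genuine gap exactly at the point you label ``the heart of the argument'', and with the inputs you cite that heart cannot be supplied. Corollary~\ref{cor:cor9} is an asymptotic statement for \emph{fixed} $k$: it yields $\mathbb{E}\left(\mathrm{tr\,}A_N^{2k}\right)\leq \widetilde{C}_k N$ with no control whatsoever on how $\widetilde{C}_k$ grows in $k$. For fixed $k$ the resulting bound $\bigl(\widetilde{C}_k N\bigr)^{1/2k}$ diverges as $N\to\infty$, so everything hinges on letting $k=k(N)\to\infty$, and then you need the bound $\mathbb{E}\left(\mathrm{tr\,}A_N^{2k}\right)\leq C^{2k}N$ to hold \emph{uniformly} for $k$ up to order $\log N$. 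The constants actually produced by the paper's Section~\ref{sec:proof}, and by your own accounting, are not of this form: the equivalence-class count $\eta_{2k}$ is of factorial/Bell-number size, so $\eta_{2k}^{1/2k}$ diverges as $k\to\infty$, and for the full Curie--Weiss ensemble the constants $C_{\ell,m}$ of Definition~\ref{def:apcor} themselves grow factorially in $\ell+m$ (cf.\ the $(K-1)!!$ in Corollary~\ref{cor:CWCorr}), which again diverges after taking the $2k$-th root. A growing-$k$ moment argument can in principle be made to work (this is the F\"uredi--Koml\'os/Bai--Yin route for Wigner matrices), but it requires counting walks with few distinct vertices far more carefully than by ``number of equivalence classes times worst-case expectation'', and it would additionally have to be adapted to dependent entries; none of that machinery is in the paper or in your sketch, so the step from $\widetilde{C}_k^{1/2k}N^{1/2k}$ to a finite constant is unproven.

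The paper's own proof avoids moments entirely and is much softer. By the de Finetti representation (Theorem~\ref{th:CWrepr}), conditionally on the parameter $t$ the entries are i.i.d.\ $P_t$-distributed, so one writes $A_N=\bigl(A_N-\frac{t}{\sqrt{N}}\,\mathcal{E}_N\bigr)+\frac{t}{\sqrt{N}}\,\mathcal{E}_N$, where $\mathcal{E}_N$ is the all-ones matrix. The first part has independent, mean-zero entries under $P_t^{N^2}$, so its expected norm is bounded by a constant by Lata{\l}a's theorem; the second part has norm $|t|\sqrt{N}$, and the Laplace method (Remark~\ref{rem:ExpTbet}, with $N^2$ in the exponent) gives $Z^{-1}\int_{-1}^{+1}|t|\,e^{-N^2F_\beta(t)/2}(1-t^2)^{-1}\,dt\approx C_1/N$, so this rank-one contribution is $O(N^{-1/2})$ in expectation. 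This conditioning-plus-rank-one-splitting is exactly what replaces the hard uniform-in-$k$ combinatorics your plan would require.
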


\begin{proof}
The expectation value of the matrix norm $\left\Vert A_{N}\right\Vert $ is
given by%
\begin{equation*}
\mathbb{E}_{N^{2}}^{N^{2}\,F_{\beta }}\left( \left\Vert A_{N}\right\Vert
\right) ~=~Z^{-1}\,\int_{-1}^{+1}\,E_{t}^{N^{2}}(\left\Vert A_{N}\right\Vert
)\,\frac{e^{-N^{2}\,F_{\beta }(t)\,/2}}{1-t^{2}}\,dt~\ \text{.}
\end{equation*}

Using the $N\times N$-matrix
\begin{equation*}
\mathcal{E}_{N}=\left(
\begin{array}{cccc}
1 & 1 & \ldots & 1 \\
1 & 1 & \ldots & 1 \\
\vdots & \vdots &  & \vdots \\
1 & 1 & \ldots & 1%
\end{array}%
\right) ~~
\end{equation*}
we estimate%
\begin{equation*}
E_{t}^{N^{2}}(\left\Vert A_{N}\right\Vert )~\leq ~E_{t}^{N^{2}}(\left\Vert
A_{N}-\frac{t}{\sqrt{N}}\,\mathcal{E}_{N}\right\Vert )\,+\,\frac{\left\vert
t\right\vert }{\sqrt{N}}\,\left\Vert \mathcal{E}_{N}\right\Vert ~.
\end{equation*}

The matrix $D_{N}=A_{N}-\frac{t}{\sqrt{N}}\,\mathcal{E}_{N}$ has random
entries $D_{N}(i,j)$ which are independent and have mean zero with respect
to the probability measure $P_{t}^{N^{2}}$. Thus we may apply \cite{Latala}
(after splitting $D_{N}$ into a lower and uper triangular part) and conclude
that $E_{t}^{N^{2}}(\left\Vert D_{N}\right\Vert )\leq C$ for a constant $%
C<\infty $.

The matrix $\mathcal{G}_{N}=\frac{1}{N}\,\mathcal{E}_{N}$ represents the
orthogonal projection onto the one dimensional subspace generated by the
vector $\eta _{N}=\frac{1}{\sqrt{N}}(1,1,\ldots ,1)$. Thus $\left\Vert
\mathcal{G}_{N}\right\Vert =1$ and $\left\Vert \mathcal{E}_{N}\right\Vert =N$%
. From Remark \ref{rem:ExpTbet} we learn that
\begin{equation*}
~Z^{-1}\,\int_{-1}^{+1}\left\vert t\right\vert \,\frac{e^{-N^{2}\,F_{\beta
}(t)\,/2}}{1-t^{2}}\,dt~\approx ~C_{1}\,\frac{1}{N}~.
\end{equation*}

Thus
\begin{equation*}
{\limsup }_{N\rightarrow \infty }\,\mathbb{E}_{N^{2}}^{N^{2}\,F_{\beta
}}\left( \left\Vert A_{N}\right\Vert \right) \leq {%
\limsup_{N\rightarrow \infty }}\;\big(C\,+\,C_{1}\frac{1}{\sqrt{N}}\big)\ =\ C~.
\end{equation*}
\end{proof}\\

The borderline case of generalized Curie-Weiss ensembles for Theorem~\ref%
{th:Hres} is the measure $\mathbb{E}_{N^{2}}^{N\,F_{\beta }}$. For this case
the expected value of the matrix norm does depend on $\beta $ and goes to
infinity as $\beta <1$ tends to $1$.

\begin{proposition}
\label{th:norm} For $\beta <1$ we have for positive constants $C_{1},\,C_{2}$%
\begin{eqnarray*}
\left( \frac{\beta }{1-\beta }\right) ^{\frac{1}{2}}\,C_{1}-C_{2} & \leq &
\liminf_{N\rightarrow \infty }\,\mathbb{E}_{N^{2}}^{N\,F_{\beta
}}\left( \left\Vert A_{N}\right\Vert \right) \\
 \leq  \limsup_{N\rightarrow \infty }\,\mathbb{E}%
_{N^{2}}^{N\,F_{\beta }}\left( \left\Vert A_{N}\right\Vert \right) &\leq &~
\left( \frac{\beta }{1-\beta }\right) ^{\frac{1}{2}}\,C_{1}+C_{2}~.
\end{eqnarray*}
\end{proposition}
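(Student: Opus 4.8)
The plan is to mimic the structure of the proof of Proposition~\ref{prop:Norm1}, but now track the $\beta$-dependence carefully because the measure $\mathbb{E}_{N^2}^{N\,F_\beta}$ is the borderline case where the ``mean-field'' contribution no longer vanishes. The key decomposition is again
\begin{equation*}
A_N = D_N + \frac{t}{\sqrt{N}}\,\mathcal{E}_N,
\qquad D_N = A_N - \frac{t}{\sqrt{N}}\,\mathcal{E}_N,
\end{equation*}
where, conditionally on the de Finetti parameter $t$, the entries of $D_N$ are independent with mean zero under $P_t^{N^2}$, and $\mathcal{E}_N$ is the all-ones matrix with $\|\mathcal{E}_N\| = N$ and leading eigenvector $\eta_N = \tfrac{1}{\sqrt N}(1,\dots,1)$. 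The central quantitative input is the asymptotics of the conditional weight of $|t|$: from Remark~\ref{rem:ExpTbet} (case~\ref{case:1}, applied with $F=F_\beta$ and the exponent $N$ rather than $N^2$), one has
\begin{equation*}
Z^{-1}\int_{-1}^{+1} |t|\,\frac{e^{-N\,F_\beta(t)/2}}{1-t^2}\,dt
\ \approx\ \frac{\sqrt 2}{\sqrt\pi}\,\frac{1}{\sqrt{\tfrac12 F_\beta''(0)}}\,\frac{1}{\sqrt N}
\ =\ \frac{\sqrt 2}{\sqrt\pi}\left(\frac{\beta}{1-\beta}\right)^{\!1/2}\frac{1}{\sqrt N},
\end{equation*}
using $F_\beta''(0) = 2\tfrac{1-\beta}{\beta}$ as computed in the proof of Corollary~\ref{cor:CWCorr}. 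Thus the mean-field term $\tfrac{t}{\sqrt N}\mathcal{E}_N$ contributes $|t|\sqrt N$ to the norm, and integrating against the weight gives a surviving contribution of order $(\beta/(1-\beta))^{1/2}$.

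For the \textbf{upper bound}, I would apply the triangle inequality $\|A_N\| \le \|D_N\| + \tfrac{|t|}{\sqrt N}\|\mathcal{E}_N\| = \|D_N\| + |t|\sqrt N$ inside the conditional expectation $E_t^{N^2}$. As in Proposition~\ref{prop:Norm1}, splitting $D_N$ into upper and lower triangular parts and invoking the Latała bound \cite{Latala} gives $E_t^{N^2}(\|D_N\|) \le C$ uniformly in $t$. Integrating against the de Finetti weight then yields
\begin{equation*}
\mathbb{E}_{N^2}^{N\,F_\beta}(\|A_N\|)
\ \le\ C + \sqrt N \cdot Z^{-1}\!\int |t|\,\frac{e^{-N\,F_\beta(t)/2}}{1-t^2}\,dt
\ \to\ \Big(\tfrac{\beta}{1-\beta}\Big)^{1/2} C_1,
\end{equation*}
which gives the claimed $\limsup$ bound with $C_2 = C$.

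For the \textbf{lower bound} I would use the variational characterization $\|A_N\| \ge \langle \eta_N, A_N\,\eta_N\rangle$ with the test vector $\eta_N$. Since $\mathcal{E}_N\eta_N = N\eta_N$, one computes $\langle \eta_N, \tfrac{t}{\sqrt N}\mathcal{E}_N\,\eta_N\rangle = t\sqrt N$, so by the reverse triangle inequality $\|A_N\| \ge |t|\sqrt N - \|D_N\|$. Taking conditional expectation, using $E_t^{N^2}(\|D_N\|)\le C$, and integrating against the weight reproduces the lower bound $(\beta/(1-\beta))^{1/2}C_1 - C_2$. The main subtlety — and the step I expect to require the most care — is \emph{uniformity in $t$} of the Latała estimate and the interchange of the conditional bound with the integration over $t$: the bound $E_t^{N^2}(\|D_N\|)\le C$ must hold with a constant independent of $t\in(-1,1)$, and one must check that the mass of the de Finetti weight concentrates near $t=0$ fast enough that the $|t|$-asymptotics from Remark~\ref{rem:ExpTbet} genuinely controls the integral against $|t|\sqrt N$ (the factor $\sqrt N$ exactly cancels the $N^{-1/2}$ decay, leaving the finite $\beta$-dependent constant). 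Once uniformity is secured, matching the two bounds identifies $C_1 = \tfrac{\sqrt2}{\sqrt\pi}$ and completes the proof.
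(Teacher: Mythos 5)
Your proposal is correct and takes essentially the same route as the paper's proof: the same decomposition $A_N = D_N + \frac{t}{\sqrt{N}}\,\mathcal{E}_N$ with the uniform-in-$t$ Lata\l a bound $E_t^{N^2}\left(\left\Vert D_N\right\Vert\right)\leq C$, and the same use of Remark~\ref{rem:ExpTbet} with $F_\beta''(0)=2\,\frac{1-\beta}{\beta}$ to extract the factor $\left(\frac{\beta}{1-\beta}\right)^{1/2}$. Your test-vector step for the lower bound reduces to exactly the reverse triangle inequality $\left\Vert A_N\right\Vert \geq |t|\sqrt{N} - \left\Vert D_N\right\Vert$ that the paper applies directly.
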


\begin{proof}
The argument is close to the proof of the previous Proposition \ref%
{prop:Norm1}. We prove the lower bound, the upper bound is similar.
\\[2mm]
\noindent With the notation of the previous proof we have
\begin{eqnarray*}
E_{t}^{N^{2}}(\left\Vert A_{N}\right\Vert )~ &\geq &~\frac{\left\vert
t\right\vert }{\sqrt{N}}\,\left\Vert \mathcal{E}_{N}\right\Vert
-\,E_{t}^{N^{2}}(\left\Vert A_{N}-\frac{t}{\sqrt{N}}\,\mathcal{E}%
_{N}\right\Vert )\, \\
&\geq &~\left\vert t\right\vert \,\sqrt{N}-\,C_{2}
\end{eqnarray*}
using again the result of \cite{Latala} and $\left\Vert \mathcal{E}%
_{N}\right\Vert =N$.
\\[2mm]
Thus
\begin{eqnarray*}
\mathbb{E}_{N^{2}}^{N\,F_{\beta }}\left( \left\Vert A_{N}\right\Vert \right)
~ &\geq &~Z^{-1}\,\sqrt{N}\,\int_{-1}^{+1}\left\vert t\right\vert \,%
\frac{e^{-N\,F_{\beta }(t)\,/2}}{1-t^{2}}\,dt\  -\; C_{2}~.
\end{eqnarray*}

\noindent From Remark \ref{rem:ExpTbet} we learn that
\begin{equation*}
~Z^{-1}\,\int_{-1}^{+1}\left\vert t\right\vert \,\frac{e^{-N\,F_{\beta
}(t)\,/2}}{1-t^{2}}\,dt~\approx ~C_{1}\,\left( \frac{\beta }{1-\beta }%
\right) ^{1/2}\frac{1}{\sqrt{N}}
\end{equation*}
hence%
\begin{equation}
\liminf_{N\rightarrow \infty }\,\mathbb{E}_{N^{2}}^{N\,F_{\beta
}}\left( \left\Vert A_{N}\right\Vert \right)~\geq~\left( \frac{\beta }{1-\beta }\right) ^{\frac{1}{2}}\,C_{1}-C_{2}~.
\end{equation}
\end{proof}\\

We turn to the case of strong correlations, in particular, we consider the
full Curie-Weiss ensemble with inverse temperature $\beta >1$. It is easy to
see that for a full $\mathbb{\ }$Curie--Weiss ensemble $X_{N}(i,j)$ with
inverse temperature $\beta >1$ the `averaged traces'%
\begin{equation*}
\frac{1}{N}\;\mathbb{E}_{N^{2}}^{N^{2}\,F_{\beta }}\left( \mathrm{tr\,}%
\left( \frac{X_{N}}{N^{1/2}}\right) ^{k}\right)
\end{equation*}
cannot converge for $k$ large enough, in fact we have:

\begin{proposition}
\label{prop:traceTiefT}Consider the random matrix $B^{(\alpha )}=\frac{X_{N}%
}{N^{\alpha }}$, with $X_{N}$ symmetric and distributed according to the the
full Curie-Weiss ensemble with $\beta >1$. Then for $\alpha <1$ and $k\ $%
large enough and even we have%
\begin{equation*}
\frac{1}{N}\; \mathbb{E}_{N^{2}}^{N^{2}\,F_{\beta }}\,\left( \mathrm{tr\,}%
\left( \frac{X_{N}}{N^{\alpha }}\right) ^{k}\right) \rightarrow \infty
\qquad \text{as }N\rightarrow \infty
\end{equation*}
and for all $k\geq 1$%
\begin{equation*}
\frac{1}{N}\;\mathbb{E}_{N^{2}}^{N^{2}\,F_{\beta }}\,\left( \mathrm{tr\,}%
\left( \frac{X_{N}}{N}\right) ^{k}\right) \rightarrow 0\qquad \text{as }%
N\rightarrow \infty~.
\end{equation*}
\end{proposition}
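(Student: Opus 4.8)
The plan is to expand the normalized trace exactly as in~\eqref{eq:grSumme}, writing
$$\frac{1}{N}\mathbb{E}\left(\mathrm{tr}\,\left(\frac{X_N}{N^{\alpha}}\right)^{k}\right) = \frac{1}{N^{1+\alpha k}}\sum_{\underline{i}}\mathbb{E}\left[X_N(\underline{i})\right],$$
and then to exploit a positivity feature special to the full Curie-Weiss ensemble. Since every entry is $\pm 1$-valued, $X_N(\underline{i})$ reduces to the product of those distinct edge variables that occur with odd multiplicity in the circuit $\underline{i}$; if this product involves $K=K(\underline{i})$ distinct variables, then by Proposition~\ref{prop:moments} applied to the de Finetti measure $\mu_N$ of $\mathbb{P}_{N^{2}}^{N^{2}F_{\beta}}$ one has $\mathbb{E}[X_N(\underline{i})]=\int t^{K}\,d\mu_N(t)$. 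For $K$ even this is $\geq 0$ and for $K$ odd it vanishes, so the entire sum consists of non-negative terms and may be bounded below by restricting to any convenient family of circuits.

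For the divergence statement I would restrict the sum to the $k$-tuples $\underline{i}=(i_1,\dots,i_k)$ with pairwise distinct entries. For $k\geq 3$ each such tuple traces out a simple $k$-cycle, so all $k$ edges have multiplicity one, $K(\underline{i})=k$, and $\mathbb{E}[X_N(\underline{i})]=\int t^{k}\,d\mu_N(t)$. There are $N!/(N-k)!\sim N^{k}$ such tuples, and by Corollary~\ref{cor:CWCorr} (case $\beta>1$, applied with exponent parameter $N^{2}$ in place of $N$) one has $\int t^{k}\,d\mu_N(t)\to \tfrac12\bigl(m(\beta)^{k}+(-m(\beta))^{k}\bigr)=m(\beta)^{k}>0$ for $k$ even. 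Combining these,
$$\frac{1}{N^{1+\alpha k}}\sum_{\underline{i}}\mathbb{E}[X_N(\underline{i})]\ \geq\ \frac{1}{N^{1+\alpha k}}\,\frac{N!}{(N-k)!}\int t^{k}\,d\mu_N(t)\ \sim\ m(\beta)^{k}\,N^{k(1-\alpha)-1},$$
which tends to $\infty$ whenever $\alpha<1$ and $k$ is even with $k>1/(1-\alpha)$, giving the first assertion.

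The second statement (with $\alpha=1$) requires only a crude upper bound, valid for every $\beta$: since $X_N(\underline{i})$ is a product of $\pm 1$-valued variables we have $|\mathbb{E}[X_N(\underline{i})]|\leq 1$, and there are at most $N^{k}$ tuples $\underline{i}$, whence
$$\frac{1}{N^{1+k}}\left|\sum_{\underline{i}}\mathbb{E}[X_N(\underline{i})]\right|\ \leq\ \frac{N^{k}}{N^{1+k}}\ =\ \frac{1}{N}\ \to\ 0.$$
I expect the only delicate point to be the correlation input for the first part: one must check that the non-decay of the correlations for $\beta>1$ survives the substitution of $N^{2}$ for the exponent parameter in Corollary~\ref{cor:CWCorr}, i.e.\ that the limiting value $m(\beta)^{k}$ is unchanged and only the rate of approach differs, so that the restricted sub-sum genuinely grows like $N^{k(1-\alpha)-1}$. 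The positivity observed above is what renders the lower bound rigorous without having to control cancellations among the remaining circuit types.
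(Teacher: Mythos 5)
Your proposal is correct and follows essentially the same route as the paper: expand the trace over circuits, use the non-negativity of all Curie-Weiss correlations (which the paper asserts and you justify via the de Finetti representation) to drop all but the distinct-index tuples, apply Corollary~\ref{cor:CWCorr} (case $\beta>1$) to get the lower bound $\sim m(\beta)^k N^{k(1-\alpha)-1}$, and use the trivial bound $|\mathbb{E}[X_N(\underline{i})]|\leq 1$ with at most $N^k$ tuples for the second assertion. The "delicate point" you flag is handled in the paper by Theorem~\ref{th:deFiCorr} holding uniformly in $M$ as the exponent parameter tends to infinity, so the substitution of $N^2$ for $N$ indeed only changes the rate, not the limit $m(\beta)^k$.
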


\begin{proof}
We compute using \ (\ref{eq:mvonbeta})%
\begin{eqnarray*}
&& \frac{1}{N}\;\mathbb{E}_{N^{2}}^{N^{2}\,F_{\beta }}\,\left( \mathrm{tr\,}%
\left( \frac{X_{N}}{N^{\alpha }}\right) ^{k}\right) \\
&&=\frac{1}{N^{1+k\alpha }}\sum_{i_{1},i_{2},\ldots i_{k}}\mathbb{E}
_{N^{2}}^{N^{2}\,F_{\beta }}\left(
X_{N}(i_{1},i_{2})\,X_{N}(i_{2},i_{3})\,\ldots \,X_{N}(i_{k},i_{1})\right) \\
&&\geq \frac{1}{N^{1+k\alpha }}\sum_{\rho \left( i_{1},i_{2},\ldots
i_{k}\right) =k}\mathbb{E}_{N^{2}}^{N^{2}\,F_{\beta }}\left(
X_{N}(i_{1},i_{2})\,X_{N}(i_{2},i_{3})\,\ldots \,X_{N}(i_{k},i_{1})\right) \\
&&\geq \frac{1}{N^{1+k\alpha }}\,C\,N^{k}\,m(\beta )^{k} \quad\rightarrow
\quad \infty \qquad \text{for }k\text{ large,}
\end{eqnarray*}
where again $m(\beta )$ denotes the unique positive solution of $\;\tanh(\beta t)\,=\,t$.
We used above, that for all correlations
\begin{equation*}
\mathbb{E}
_{N^{2}}^{N^{2}\,F_{\beta }}\left(
X_{N}(i_{1},i_{2})\,X_{N}(i_{2},i_{3})\,\ldots \,X_{N}(i_{k},i_{1})\right)\geq 0\,.
\end{equation*}
\\[1mm]
The second assertion of the Proposition follows from%
\begin{eqnarray*}
&& \frac{1}{N}\;\mathbb{E}_{N^{2}}^{N^{2}\,F_{\beta }}\,\left( \mathrm{tr\,}%
\left( \frac{X_{N}}{N}\right) ^{k}\right) \\
&&=\frac{1}{N^{1+k}}\sum_{i_{1},i_{2},\ldots i_{k}}\mathbb{E}%
_{N^{2}}^{N^{2}\,F_{\beta }}\left(
X_{N}(i_{1},i_{2})\,X_{N}(i_{2},i_{3})\,\ldots \,X_{N}(i_{k},i_{1})\right) \\
&&\leq \frac{1}{N^{1+k}}\;N^{k}\ \rightarrow\ 0.
\end{eqnarray*}
Above we used that there are at most $N^{k}$ summand in the above sum.
\end{proof}\\

From Proposition \ref{prop:traceTiefT} we conclude that the eigenvalue
distribution function of $\frac{X_{N}}{N}$ converges to the Dirac measure $%
\delta _{0}$, while for $\frac{X_{N}}{N^{\alpha }}$ $(\alpha <1)$ at least
the moments do not converge. For $\beta >1$ the dependence (`interaction')
between the $X_{N}(i,j)$ is so strong that a macroscoping portion of the
random variables is aligned, i.e.\ either most of the $X_{N}(i,j)$ are
equal to $+1$ or most of the $X_{N}(i,j)$ are are equal to $-1$ and there
are about $m(\beta )N^{2}$ more aligned spins than others. Moreover, for
large $\beta $, the matrix $\frac{X_{N}}{N}$ should be \ close to the matrix
\begin{equation*}
\mathcal{G}_{N}=\frac{1}{N} \, \mathcal{E}_{N}\end{equation*}
or to $-\mathcal{G}_{N}$. This intuition is supported by the following
observation.

\begin{proposition}
Let $B_{N}=\frac{X_{N}}{N}$ with $X_{N}$ distributed according to $\mathbb{P}%
_{N^{2}}^{N^{2}\,F_{\beta }},$ then

\begin{enumerate}
\item \label{Teil1}For $\beta <1$ we have $\left\Vert B_{N}\right\Vert
\rightarrow 0$ in probability.

\item \label{Teil2}For $\beta >1$ we have $\left\Vert B_{N}\right\Vert
\rightarrow m\left( \beta \right) $ in probability.
\end{enumerate}
\end{proposition}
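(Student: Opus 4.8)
The plan is to exploit the de Finetti representation of the full Curie-Weiss ensemble established in Theorem~\ref{th:CWrepr}. Writing
\begin{equation*}
d\nu_N(t)\ =\ Z^{-1}\,\frac{e^{-N^2 F_\beta(t)/2}}{1-t^2}\,dt ,
\end{equation*}
the law of $X_N$ is the mixture $\mathbb{P}_{N^2}^{N^2 F_\beta}(\cdot)=\int_{-1}^{1}P_t^{N^2}(\cdot)\,d\nu_N(t)$, so that conditionally on the parameter $t$ the entries $X_N(i,j)$ are independent (up to symmetry), $\pm1$-valued with mean $t$ and variance $1-t^2$. Conditionally on $t$ I would decompose, exactly as in the proof of Proposition~\ref{prop:Norm1},
\begin{equation*}
B_N\ =\ \frac{X_N}{N}\ =\ t\,\mathcal{G}_N\ +\ \frac{1}{\sqrt N}\,D_N ,\qquad D_N=\frac{1}{\sqrt N}\bigl(X_N-t\,\mathcal{E}_N\bigr),
\end{equation*}
where $\mathcal{G}_N=\frac1N\mathcal{E}_N$ is the rank-one projection onto $\eta_N=\frac1{\sqrt N}(1,\dots,1)$ and, under $P_t^{N^2}$, the matrix $D_N$ has independent, centred, uniformly bounded entries.

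Since $\|\mathcal{G}_N\|=1$, the triangle inequality gives $\bigl|\,\|B_N\|-|t|\,\bigr|\le \frac{1}{\sqrt N}\|D_N\|$, so the norm of $B_N$ is governed by the modulus of the de Finetti parameter up to a controllable error. As the centred entries have variance $1-t^2\le 1$, Lata{\l}a's bound \cite{Latala} (after splitting $D_N$ into its upper and lower triangular parts) yields $E_t^{N^2}(\|D_N\|)\le C$ with a constant $C<\infty$ that is uniform in $t\in(-1,1)$. By Markov's inequality it follows that $\frac{1}{\sqrt N}\|D_N\|\to 0$ in $P_t^{N^2}$-probability, uniformly in $t$.

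It then remains to control the concentration of $\nu_N$, and here the phase transition enters. The computation of the extrema of $F_\beta$ carried out in the proof of Corollary~\ref{cor:CWCorr} shows that for $\beta<1$ the function $F_\beta$ has a unique (quadratic) minimum at $t=0$, while for $\beta>1$ it has exactly the two minima $\pm m(\beta)$ with $\tanh(\beta\,m(\beta))=m(\beta)$. Because the exponent carries the large factor $N^2$, the Laplace method (Proposition~\ref{prop:Laplace}) shows that $\nu_N$ concentrates on the set of global minimisers of $F_\beta$; quantitatively, $\nu_N\bigl(\{t:\ |t-t^\ast|>\delta\ \text{for every minimiser }t^\ast\}\bigr)\to 0$ for each $\delta>0$. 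For $\beta<1$ this means $|t|\to 0$ in $\nu_N$-probability; for $\beta>1$ the even symmetry of $F_\beta$ splits the mass equally near $+m(\beta)$ and $-m(\beta)$, so in either case $|t|\to m(\beta)$ in $\nu_N$-probability.

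Combining the two ingredients finishes the argument. Writing $\mathbb{P}(\cdot)=\int P_t^{N^2}(\cdot)\,d\nu_N(t)$ and splitting the relevant event according to whether $\frac{1}{\sqrt N}\|D_N\|$ or $|t|$ is the large quantity, one obtains for every $\varepsilon>0$, in the case $\beta<1$,
\begin{equation*}
\mathbb{P}\bigl(\|B_N\|>\varepsilon\bigr)\ \le\ \nu_N\bigl(|t|>\tfrac{\varepsilon}{2}\bigr)\ +\ \sup_{t}P_t^{N^2}\Bigl(\tfrac{1}{\sqrt N}\|D_N\|>\tfrac{\varepsilon}{2}\Bigr)\ \longrightarrow\ 0 ,
\end{equation*}
and the analogous split around $m(\beta)$ yields $\mathbb{P}\bigl(|\,\|B_N\|-m(\beta)\,|>\varepsilon\bigr)\to 0$ for $\beta>1$. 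The only genuinely delicate point is making the concentration of $\nu_N$ quantitative, i.e.\ upgrading the pointwise asymptotics of Proposition~\ref{prop:Laplace} to the tail bound $\nu_N\to0$ away from the minimisers; this is a routine large-deviations estimate for the Curie-Weiss magnetisation, and everything else is a uniform-in-$t$ application of the already-cited norm bound.
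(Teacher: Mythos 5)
Your argument is correct, but it takes a genuinely different route from the paper's, at least for the harder case $\beta>1$. For part~\ref{Teil1} the two arguments are close cousins: the paper reduces to the expectation bound of Proposition~\ref{prop:Norm1} (whose proof uses exactly your decomposition $A_N=D_N+tN^{-1/2}\mathcal{E}_N$, Lata{\l}a's bound, and the Laplace asymptotics $\int|t|\,d\nu_N(t)\approx C_1/N$) and then applies Markov's inequality to $\|B_N\|=N^{-1/2}\|A_N\|$; you use the same decomposition but replace the expectation asymptotics by a tail bound for $\nu_N$. For part~\ref{Teil2}, however, the paper stays entirely within its correlation-function machinery and proceeds by moments: the lower bound $\mathbb{E}(\|B_N\|^2)\geq\mathbb{E}(\|B_N\eta_N\|^2)\to m(\beta)^2$ comes from testing against $\eta_N=N^{-1/2}(1,\dots,1)$, the upper bound from $\mathbb{E}(\|B_N\|^{2k})\leq\mathbb{E}(\mathrm{tr}\,B_N^{2k})$ together with Corollary~\ref{cor:CWCorr} (only index tuples with $2k$ distinct indices contribute in the limit), whence $\mathbb{E}(\|B_N\|^{2k})\to m(\beta)^{2k}$ for all $k$ and convergence in distribution to the point mass $\delta_{m(\beta)}$. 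You instead condition on the de Finetti parameter, note the deterministic inequality $\bigl|\,\|B_N\|-|t|\,\bigr|\leq N^{-1/2}\|D_N\|$, control $\|D_N\|$ uniformly in $t$ via Lata{\l}a, and let the concentration of $\nu_N$ on the minimizers of $F_\beta$ finish the job. Your route is more unified (both temperature regimes differ only in where $F_\beta$ attains its minimum) and more quantitative (errors of order $N^{-1/2}$ given $t$, exponentially small tails in $t$); the paper's route avoids any quantitative Laplace or large-deviation input beyond the already-proved correlation asymptotics, at the price of a case-specific moment computation.

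The one step you defer as ``routine'' really is routine, but since it carries the whole phase-transition content, it deserves its one-line proof: with $A_\delta=\{t:\ |t-t^\ast|\geq\delta\ \text{for every minimiser } t^\ast\}$ and $m_\delta=\inf_{A_\delta}F_\beta$, write $e^{-N^2F_\beta(t)/2}=e^{-(N^2-1)F_\beta(t)/2}\,e^{-F_\beta(t)/2}\leq e^{-(N^2-1)m_\delta/2}\,e^{-F_\beta(t)/2}$, so that the numerator of $\nu_N(A_\delta)$ is at most $e^{-(N^2-1)m_\delta/2}\int_{-1}^{1}e^{-F_\beta(t)/2}\,\frac{dt}{1-t^2}$, a finite constant times an exponential; here $F_\beta(t)\to\infty$ as $t\to\pm1$ guarantees both that this integral converges and that $m_\delta>\min F_\beta$ despite $A_\delta$ being non-compact. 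The normalisation $Z$ is bounded below, via Proposition~\ref{prop:Laplace}, by a polynomially small multiple of $e^{-N^2\min F_\beta/2}$, so $\nu_N(A_\delta)\to0$ exponentially fast. With that supplied, your proof is complete.
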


\begin{proof}
Part \ref{Teil1} follows from $\left\Vert B_{N}\right\Vert =\frac{1}{\sqrt{N}%
}\left\Vert A_{N}\right\Vert $ and from the estimate $\sup_{N}\mathbb{E}_{N^{2}}^{N^{2}\,F_{%
\beta }}\left( \left\Vert A_{N}\right\Vert \right) <\infty $ by Proposition %
\ref{th:norm}.\\[1mm]

To prove \ref{Teil2} we start with an estimate from below. We set $\eta _{N}=%
\frac{1}{\sqrt{N}}(1,\ldots ,1)$ and use the short hand notation $\mathbb{E}$ instead of
$\mathbb{E}_{N^{2}}^{N^{2}\,F_{\beta }}$.%
\begin{eqnarray*}
\mathbb{E}(\left\Vert B_{N}\right\Vert ^{2})~ &\geq &~\mathbb{E(}\left\Vert
B_{N}\,\eta _{N}\right\Vert ^{2}) \\
&=&\frac{1}{N^{3}}\,\sum_{i=1}^{N}\,\mathbb{E}\left\vert
\sum_{j=1}^{N}X_{N}(i,j)\right\vert ^{2} \\
&=&\frac{1}{N^{2}}\sum_{j,k=1}^{N}\,\mathbb{E}\Big(
X_{N}(1,j)\,X_{N}(1,k)\Big) \\
&=&\frac{1}{N^{2}}\,\Big( 1+N(N-1)\,\mathbb{E(}X_{N}(1,1)\,X_{N}(1,2)%
\Big) \,\rightarrow \,m(\beta )^{2}
\end{eqnarray*}
since $\mathbb{E}\Big(X_{N}(1,1)\,X_{N}(1,2)\Big)\rightarrow m(\beta )^{2}$ for $%
\beta >1$ by Proposition \ref{cor:CWCorr}. It follows that%
\begin{equation*}
\liminf_{N\to \infty}\mathbb{E}(\left\Vert B_{N}\,\eta _{N}\right\Vert ^{2k})~ \geq ~\liminf_{N\to \infty}\mathbb{E}%
(\left\Vert B_{N}\right\Vert ^{2})^{k} \geq ~m(\beta )^{2k}~.
\end{equation*}

\noindent We prove the converse inequality. For $k>1$ we have%
\begin{eqnarray*}
&&\mathbb{E}(\left\Vert B_{N}\right\Vert ^{2k})~ \leq ~\mathbb{E}(\mathrm{tr\,%
}B_{N}^{\,\,2k})  \notag \\
&=&\frac{1}{N^{2k}}\;\mathbb{E}\hspace*{0.01in}\left(
\sum_{i_{1},i_{2},\ldots
,i_{2k}}\,X_{N}(i_{1},i_{2})\,X_{N}(i_{2},i_{3})\,\ldots
\,X_{N}(i_{2k},i_{1})\right)~.
\end{eqnarray*}

Let $\rho =\rho (i_{1},i_{2},\ldots ,i_{2k})$ denote the number of different
indices among the $i_{j}$, i. e. $\rho (i_{1},i_{2},\ldots ,i_{2k})=\#\left\{
i_{1},i_{2},\ldots ,i_{2k}\right\} ,$ then%
\begin{equation*}
\frac{1}{N^{2k}}\;\mathbb{E}\left( \sum_{\rho \left(
i_{1},i_{2},\ldots ,i_{2k}\right)
<2k}\mkern-20mu  X_{N}(i_{1},i_{2}) \,  X_{N}(i_{2},i_{3})\,\ldots
\,X_{N}(i_{2k},i_{1})\right) \rightarrow 0
\end{equation*}
while%
\begin{equation*}
\frac{1}{N^{2k}}\;\mathbb{E}\left( \sum_{\rho \left(
i_{1},i_{2},\ldots ,i_{2k}\right)
=2k}\mkern-20mu X_{N}(i_{1},i_{2})\,X_{N}(i_{2},i_{3})\,\ldots
\,X_{N}(i_{2k},i_{1})\right) \rightarrow m(\beta )^{2k}
\end{equation*}
by Proposition \ref{cor:CWCorr}.

We also have
\begin{equation}
\mathbb{E}\left(\Vert B_N\Vert^{2}\right)~\leq~\mathbb{E}\left(\Vert B_N\Vert^{4}\right)^{1/2}
\end{equation}
 Thus we have proved that%
\begin{equation*}
\mathbb{E}(\left\Vert B_{N}\right\Vert ^{2k})~\rightarrow ~m(\beta )^{2k}
\end{equation*}
for all $k\in
%TCIMACRO{\U{2115} }%
%BeginExpansion
\mathbb{N}
%EndExpansion
.$ It follows that $\left\Vert B_{N}\right\Vert ^{2}$ converges in
distribution to $\delta _{m(\beta )^{2}},$ hence $\left\Vert
B_{N}\right\Vert $ converges in distribution to $\delta _{m(\beta )},$
therefore it converges in probability to $m(\beta )$.
\end{proof}

\texttt{\bigskip }

\bigskip\bigskip

\noindent
\begin{tabular}{lcl}
\textbf{Winfried Hochst\"attler}&\quad& \texttt{winfried.hochstaettler@fernuni-hagen.de}\\
\textbf{Werner Kirsch}&\quad& \texttt{werner.kirsch@fernuni-hagen.de}\\
\textbf{Simone Warzel}&\quad& \texttt{warzel@ma.tum.de}
\end{tabular}
\end{document}